\theoremstyle{plain}
\newtheorem{theorem}{Theorem}[section]
\newtheorem{lemma}[theorem]{Lemma}
\theoremstyle{definition}
\newtheorem{definition}[theorem]{Definition}
\newtheorem{example}[theorem]{Example}
\theoremstyle{remark}
\newtheorem{remark}[theorem]{Remark}
\newtheorem*{notation}
{Notation}
\newcommand{\hide}[1]{}
\newcommand{\ie}{\textit{i.e.}}
\newcommand{\cf}{\textit{cf.}}
\newcommand{\eg}{\textit{e.g.}}
\newcommand{\opr}{\mathsf{o}}
\newcommand{\U}[1]{|{#1}|}
\newcommand{\llrrbrk}[1]{{\llbracket{#1}\rrbracket}}
\newcommand{\lrangle}[1]{{\langle{#1}\rangle}}
\newcommand{\ol}[1]{\overline{#1}}
\newcommand{\ul}[1]{\underline{#1}}
\newcommand{\wt}[1]{\widetilde{#1}}
\newcommand{\wh}[1]{\widehat{#1}}
\newcommand{\setof}[1]{\{\, #1 \,\}}
\newcommand{\setofz}[1]{\{ #1 \}}
\newcommand{\suchthat}{\mid}
\newcommand{\inj}[1]{\iota_{#1}}
\newcommand{\arrowlength}{4.25}
\renewcommand{\mapsto}{\mathrel{\xy\ar@{|->}(\arrowlength,0)\endxy}}
\newcommand{\natrightarrow}{\mathrel{\xy\ar(\arrowlength,0)^(.4){.}\endxy}}
\renewcommand{\rightarrow}{\mathrel{\xy\ar@{->}(\arrowlength,0)\endxy}}
\renewcommand{\leftarrow}{\mathrel{\xy\ar@{->}(-\arrowlength,0)\endxy}}
\newcommand{\epirightarrow}{\mathrel{\xy\ar@{->>}(\arrowlength,0)\endxy}}
\newcommand{\epileftarrow}{\mathrel{\xy\ar@{<<-}(\arrowlength,0)\endxy}}
\newcommand{\rightembedding}{\mathrel{\xy\ar@{^(->}(\arrowlength,0)\endxy}}
\newcommand{\catalg}[1]{{#1\text{-}\mathbf{Alg}}}
\newcommand{\catAlg}[2]{{{#1}^{#2}}}
\newcommand{\cat}[1]{\mathscr{#1}}
\newcommand{\iso}{\cong}
\newcommand{\tensor}{\otimes}
\newcommand{\cotensor}%
{\mathrel{\xy\ar@{}(0,0)|-{\textstyle\cap}\ar@{}(0,0)|-{\textstyle|}\endxy}}
\newcommand{\id}{\mathrm{id}}
\newcommand{\comp}{\circ} 
\newcommand{\monad}[1]{\mathbf{#1}}
\newcommand{\monten}{\cdot}
\newcommand{\alphaact}{\wt{\alpha}}
\newcommand{\lambdaact}{\wt{\lambda}}
\newcommand{\s}[1]{\mathsf{#1}}
\newcommand{\mbf}[1]{\mathbf{#1}}
\newcommand{\mbs}[1]{{\boldsymbol{#1}}}
\newcommand{\st}{\mathsf{st}}
\newcommand{\qt}{\mathsf{q}}
\newcommand{\Set}{\mathbf{Set}}
\newcommand{\Nat}{\mathbb{N}}
\newcommand{\Nom}{\mathbf{Nom}}
\newcommand{\SetToI}{\Set^\ISet}
\newcommand{\ISet}{\mathbb{I}}
\newcommand{\algterm}[3]%
{#1 \vartriangleright #2 \vdash #3}
\newcommand{\funterm}[4]%
{#1 : #2 \vartriangleright #3 \vdash #4}
\newcommand{\funtermz}[2]%
{#1 \vdash #2}
\newcommand{\eqsys}[5]%
{\funterm{#1}{#2}{#3}{#4 \equiv #5}}
\newcommand{\funeq}[4]%
{\algterm{#1}{#2}{#3 \equiv #4}}
\newcommand{\eqsysz}[3]%
{\funtermz{#1}{#2 \equiv #3}}
\newcommand{\terminctx}[2]%
{#1 \vdash #2}
\newcommand{\eqn}[3]%
{#1 \vdash #2 \equiv #3}
\newcommand{\Eeqn}[3]%
{#1 \vdash_E #2 \equiv #3}
\newcommand{\eqnz}[2]%
{#1 \equiv #2}
\newcommand{\algstr}[2]{#1_{#2}}
\newcommand{\toTES}[1]{\ul{#1}}
\newcommand{\tensorext}[2]{\langle{#1}\rangle{#2}}
\newcommand{\Perm}{\mathfrak{S}_0(\Atom)}
\newcommand{\Atom}{\mathsf{A}}
\newcommand{\NomAtom}{\mathbb{A}}
\newcommand{\permact}{\cdot}
\DeclareMathOperator{\freshfor}{\#}
\newcommand{\suppsym}[1]{\mathsf{supp}_{#1}}
\newcommand{\supp}[2]{\suppsym{#1}(#2)}
\newcommand{\suppz}[1]{\supp{}{#1}}
\newcommand{\septen}{\freshfor}
\newcommand{\sephom}[1]{[ #1 ]}
\newcommand{\sephombig}[1]{\big[ #1 \big]}
\newcommand{\atmtrans}[2]{({#1 \; #2})}
\newcommand{\atmabs}[1]{\lrangle{#1}\,}
\newcommand{\atmprd}[1]{\NomAtom^{\septen #1}}
\newcommand{\atmhom}[2]{\sephom{\atmprd{#1},#2}}
\newcommand{\atmlstz}[1]{\boldsymbol{#1}}
\newcommand{\atmlst}[2]{{\atmlstz{#1}}^{#2}}
\newcommand{\atmlsth}[2]{\boldsymbol{#1}}
\newcommand{\nomctx}[2]{[{#1}]{#2}}
\newcommand{\Vext}[2]{{#1}^{\langle#2\rangle}}
\newcommand{\ev}{\mathsf{e}}
\newcommand{\subst}[1]{\{ #1 \}}
\newcommand{\myproof}[1]{\mbox{$ #1 \DisplayProof $}}
\newcommand{\mh}[2]{\mbox{$#1$-#2}}
\newcommand{\nh}[2]{\mbox{{#1}-#2}}
\newcommand{\Ax}{\mathcal A}
\newcommand{\inthom}[1]{[ #1 ]}
\theoremstyle{plain}
\def\eg{{\em e.g.}}
\def\cf{{\em cf.}}
\def\doi{6 (3:12) 2010}
\begin{document}

\title[On the Mathematical Synthesis of Equational Logics]
{On the Mathematical Synthesis of Equational Logics\rsuper*}

\author[M.\ Fiore]{Marcelo Fiore\rsuper a}	
\address{{\lsuper a}University of Cambridge, Computer Laboratory,\newline 15 JJ Thomson Avenue,
  Cambridge CB3 0FD, UK.}	
\email{Marcelo.Fiore@cl.cam.ac.uk}  

\author[C.-K.\ Hur]{Chung-Kil Hur\rsuper b}	
\address{
{\lsuper b}Max Planck Institute for Software Systems~(MPI-SWS),\newline Kaiserslautern
and Saarbr\"ucken, Germany.} 
\email{
gil@mpi-sws.org}  


\keywords{Equational logic, algebraic theories, soundness and
completeness, rewriting, variable binding, \mbox{$\alpha$-equivalence}}
%
\subjclass{D.3.1, F.3.1, F.3.2, F.4.1, I.2.3.}
%
\titlecomment{{\lsuper*}This paper gives a complete development (with proofs) of
results announced in~\citet{FioreHur08}.  However, for simplicity of
exposition, these results are restricted here to the case of
\mbox{\emph{mono-sorted}} algebraic theories and equational logics.}
%

\maketitle

\begin{abstract}
We provide a mathematical theory and methodology for synthesising
equational logics from algebraic metatheories.  We illustrate our
methodology by means of two applications: a rational reconstruction of
Birkhoff's Equational Logic and a new equational logic for reasoning about
algebraic structure with name-binding operators.
\end{abstract}

\section*{\ul{\large Introduction}}\label{S:introduction}

\citet{Birkhoff35} initiated the general study of algebraic structure.
Importantly for our concerns here, his development was from (universal)
algebra to (equational) logic.  Birkhoff's starting point was the informal
conception of algebra based on familiar concrete examples.  Abstracting
from these, he introduced the concepts of signature and equational
presentation, and thereby formalised what is now our notion of (abstract)
algebra.  Subsequently 
he set up the model theory of equational presentations~(varieties) and
analysed their structure from the standpoint of 
logical inference 
for algebraic languages.  In doing so, he introduced Equational Logic as a
sound and complete deductive system for reasoning about equational
assertions in Algebraic Theories.

Since Birkhoff's work, our understanding of algebraic structure has
deepened; having been both systematised and extended
(see~\eg~\citet{Lawvere63}, \citet{Ehresmann}, \citet{Burroni},
\citet{KellyPower93}, \citet{Power99}).  On the other hand, the
development of equational logics
has remained ad~hoc.  The main aim 
of the current work is to 
fill in this gap.  


Our standpoint is that equational logics should arise from algebraic
structure.  In this direction, our first purpose 
is to provide a mathematical theory and methodology for synthesising
equational logics from algebraic metatheories~(
Part~I).  Our second
purpose 
is to establish 
the practicality of the approach.  
In this respect, 
we illustrate our methodology by means of two applications: a rational
reconstruction of Birkhoff's Equational Logic and a new equational logic
for reasoning about algebraic structure with name-binding operators~(
Part~II).

\hide{
The many-sorted case requires a more involved categorical
theory~(see~\citet{FioreHur08} and~\citet{HurThesis}) which we will detail
elsewhere.
\citet{FioreHurSOEqLog} gives a major application of the general theory by
synthesising a second-order extension of many-sorted equational logic.
%
}

\hide{
For simplicity, the notion of algebraic metatheory of this paper is restricted
to the mono-sorted context.  As such, algebraic metatheories are specified by
a strong monad on a symmetric monoidal closed category.  This structure
provides:
\begin{enumerate}
  \item 
    an abstract notion of term in context~(as given by a Kleisli map), and
    therefore abstract notions of equation and equational presentation~(as
    given by sets of parallel pairs of Kleisli maps); 
    
  \item
    a satisfaction relation between Eilenberg-Moore algebras and equational
    assertions.
\end{enumerate}
A Monadic Equational System is then defined as an algebraic metatheory
together with an equational presentation; models for these are
Eilenberg-Moore algebras satisfying the equational presentation.

Our mathematical development then includes:
\begin{enumerate}
  \item
    an Equational Metalogic for reasoning about \ldots
  \item
    a framework for analysing completeness.
\end{enumerate}
}

\hide{
\medskip
The paper is organised in two parts.  Part~I reviews the basic
theory that underlies our methodology for synthesising equational logics.
Part~II exemplifies the method with Birkhoff's \emph{Equational Logic} and
develops and a new \emph{Synthetic Nominal Equational Logic} for reasoning
about algebraic structures with name-binding operations.
}

\section*{\ul{\large Part I. Theory}}

In this first part of the paper, we 
present our mathematical framework for synthesising equational logics.
For simplicity of exposition, we restrict attention to the
\mbox{\emph{mono-sorted}} context.  As such, we consider algebraic
metatheories given by strong monads on symmetric monoidal categories.
These provide algebraic structure that allows the specification of
equational presentations in the form of \emph{Monadic Equational
Systems}~(Section~\ref{MonadicEquationalSystems}).
Monadic Equational Systems come equipped with a canonical model theory
whereby models are Eilenberg-Moore algebras satisfying the
equations. 
An \emph{Equational Metalogic}~(Section~\ref{EquationalMetalogic}) for
reasoning about equality in such models is presented.  This deductive
system has been designed to guarantee sound 
derivations. 
As for completeness~(Section~\ref{InternalCompletenessSection}), a
mathematical justification of the well-known use of free constructions in
equational completeness proofs is 
given, 
and this 
is backed up with an inductive method for constructing free
algebras. 

\section{Monadic Equational Systems}
\label{MonadicEquationalSystems}

Monadic Equational Systems~(MESs) are defined and their model theory is
explained.  

\subsection{Monadic Equational Systems}

The concept of MES provides a general abstract notion of equational
presentation.

\begin{definition}[Terms and equations]
  A \emph{term} for an endofunctor $T$ on a category $\cat C$ of arity $A$ and
  coarity $C$ is a Kleisli map $C\rightarrow T A$ in $\cat C$.  A parallel
  pair of terms ${\eqnz{t}{t'}:C\rightarrow T A}$ is called an
  \emph{equation}.
\end{definition}

\begin{definition}[Monadic Equational Systems]
  A \emph{MES} $\mathbb{S} = (\ul{\cat C},\monad{T},\Ax)$ consists of a
  strong monad $\monad T$ on a symmetric monoidal closed category
  $\ul{\cat C}$ together with a set of equations $\Ax$.
\end{definition}

\begin{notation}
For a strong monad~$\monad T$ on a symmetric monoidal closed 
category~$\ul{\cat C}$ we implicitly assume that the respective underlying
structures are denoted by ${(T,\eta,\mu,\st)}$ and 
${(\cat C,I,\tensor,[-,=])}$.
\end{notation}


\subsection{Model theory}

Terms admit interpretations in algebras and these give a model-theoretic
notion of equality.

\medskip

Let $(T,\st)$ be a strong endofunctor on a symmetric monoidal closed 
category~$\ul{\cat C}$.  Every term~${t:C\rightarrow TA}$ admits an
\emph{interpretation}
\[
\llrrbrk{t}_{(X,s)}: [A,X]\tensor C \rightarrow X
\]
in a \mbox{$T$-algebra} $(X,s:TX\rightarrow X)$ given by the composite
\[
\xymatrix@C=50pt@R=0pt{
[A,X]\tensor C \ar[r]^-{[A,X]\tensor t} & 
[A,X]\tensor TA
\ar[r]^-{\st_{[A,X],A}}
&
T\big([A,X]\tensor A\big) 
}
\!\!\!
\xymatrix@C=30pt@R=0pt{
\ar[r]^-{T(\epsilon^A_X)} 
& 
TX \ar[r]^-{s} 
& 
X
}.
\]
We thus obtain a \emph{satisfaction relation} between algebras and equations:
for all \mbox{$T$-algebras}~$(X,s)$ and equations 
$\eqnz u v: C\rightarrow TA$,
\[
(X,s)\models \eqnz u v: C\rightarrow TA
\quad \text{iff} \quad 
\llrrbrk u_{(X,s)} = \llrrbrk v_{(X,s)}
 : [A,X]\tensor C\rightarrow X
\enspace .
\]

\begin{definition}[Algebras]
An \emph{\mh{\mathbb{S}} algebra} for a 
MES~$\mathbb{S} =(\ul{\cat C},\monad T,\Ax)$ is an Eilenberg-Moore
algebra~$(X,s)$ for the monad~$\monad{T}$ satisfying the equations in $\Ax$;
that is, such that\linebreak 
${(X,s) \models \eqnz u v: C\rightarrow TA}$ for all 
${(\eqnz u v: C\rightarrow TA)\in \Ax}$.
\end{definition}

The category~$\catalg{\mathbb{S}}$ is the full subcategory of the
category~$\catAlg{\cat C}{\monad T}$ (of Eilenberg-Moore algebras for the
monad~$\monad T)$ consisting of the $\mathbb{S}$-algebras. 
We thus have the following situation
\[
\xymatrix{
\catalg{\mathbb{S}}\
\ar@{^(->}[r] \ar[dr]_-{U_\mathbb{S}}
& \catAlg{\cat C}{\monad T}\ \ar[d]^(.5){U_{\monad T}} 
\ar@{^(->}[r] 
& \catalg{T} \ar[dl]^-{U_T}
\\
& \cat C & 
}
\]
where $\catalg T$ denotes the category of algebras for the
endofunctor~$T$.

\section{Equational Metalogic}
\label{EquationalMetalogic}

We present a sound deductive system for reasoning about the equality of terms
in MESs.

\subsection{Equational Metalogic}

The \emph{Equational Metalogic~(EML)} associated to a 
MES~$\mathbb S=(\ul{\cat C},\monad T,\Ax)$ has judgements of the form 
\[
\eqn{\Ax}{u}{v}:C\rightarrow TA
\enspace,
\]
where $u$ and $v$ are terms of arity $A$ and coarity $C$, and consists of the
following inference rules.
\begin{enumerate}[$\bullet$]
\item
Equality rules.\\
\[
\myproof{
\AxiomC{$\phantom{\Ax}$}
\LeftLabel{$\s{Ref}$}
\UnaryInfC{$\eqn \Ax u u : C\rightarrow TA$}
}
\qquad\qquad\qquad
\myproof{
\AxiomC{$\eqn \Ax u v : C\rightarrow TA$}
\LeftLabel{$\s{Sym}$}
\UnaryInfC{$\eqn \Ax v u : C\rightarrow TA$}
}
\]\\[-2.5mm]

\[
\myproof{
\AxiomC{
  $\eqn \Ax u v: C\rightarrow TA$
  \qquad 
  $\eqn \Ax v w: C\rightarrow TA$}
\LeftLabel{$\s{Trans}$}
\UnaryInfC{
  $\eqn \Ax u w: C\rightarrow TA$}
}
\]\\[-2.5mm]

\item
Axioms.\\
\[
\myproof{
\LeftLabel{$\s{Axiom}$}
\AxiomC{$(\eqnz u v: C\rightarrow TA)\in \Ax$}
\UnaryInfC{$
\eqn \Ax u v : C\rightarrow TA
$}
}
\]\\[-2.5mm]

\item
Congruence of substitution.\\
\[
\myproof{
\AxiomC{$
\eqn{\Ax}{u_1}{v_1}:C \rightarrow TB
\qquad
\eqn{\Ax}{u_2}{v_2}:B \rightarrow TA
$}
\LeftLabel{$\s{Subst}$}
\UnaryInfC{$
\eqn{\Ax}{u_1\subst{u_2}\,}{\, v_1\subst{v_2}}:C\rightarrow TA$}
}
\]
where $w_1\subst{w_2}$ denotes the Kleisli composite
$\!\!\xymatrix@R=0pt@C=7.5pt{ 
  C\ar[rr]^-{w_1} && TB \ar[rrr]^-{T(w_2)} &&& T(TA) \ar[rr]^-{\mu_A} &&
  TA }$.\\

\item
Congruence of tensor extension.\\
\[
\myproof{
\AxiomC{$
\eqn{\Ax}{u}{v}: C\rightarrow TA
$}
\LeftLabel{$\s{Ext}$}
\RightLabel{}
\UnaryInfC{$
\eqn{\Ax}{\tensorext{V}{u} \,}{\, \tensorext{V}{v}}
  : V\tensor C\rightarrow T(V\tensor A)
$}
}
\]
where 
$\tensorext{V}{w}$ denotes the composite
$\!\!\xymatrix@R=0pt@C=30pt{
V\tensor C \ar[r]^-{V\tensor w} 
& 
V\tensor TA \ar[r]^-{\st_{V,A}} 
& 
T(V\tensor A)
}$.\\

\item
Local character.\\
\[\quad\qquad
\myproof{
\AxiomC{$\eqn{\Ax}{ u\comp e_i \,}{\, v\comp e_i}
    : C_i\rightarrow TA \quad (i\in I)$}
\LeftLabel{$\s{Local}$}
\RightLabel{\big($\setof{e_i:C_i\rightarrow C}_{i\in I}$ jointly epi\big)}
\UnaryInfC{$\eqn \Ax u  v : C\rightarrow TA$}
}
\]\\
(Recall that a family of maps $\setof{e_i:C_i\rightarrow C}_{i\in I}$ is said
to be \emph{jointly epi} if, for any $f,g:C\rightarrow X$ such that
$\forall_{i\in I}\ 
 {f\comp e_i \, = \, g\comp e_i: C_i\rightarrow X}$,
it follows that $f\,=\,g$.)\\
\end{enumerate}

\begin{remark}
In the presence of coproducts and under the rule~$\s{Ref}$, the
rules~$\s{Subst}$ and~$\s{Local}$ are inter-derivable with the
rules\\[-1mm]
\[\myproof{
\AxiomC{$
\eqn{\Ax}{u}{v}: C\rightarrow T\big(\coprod_{i\in I}B_i\big)
\qquad
\eqn{\Ax}{u_i}{v_i}: B_i\rightarrow TA \enspace (i\in I)$}
\LeftLabel{$\s{Subst}_\amalg$}
\UnaryInfC{$\eqn\Ax{u\subst{[u_i]_{i\in I}}}{v\subst{[v_i]_{i\in I}}}
:C\rightarrow TA$}
}\]\\
and\\[-2mm]
\[\myproof{
\AxiomC{$\eqn{\Ax}{u\comp e}{v \comp e}: C'\rightarrow TA$}
\LeftLabel{$\s{Local}_1$}
\RightLabel{($e:C'\epirightarrow C$ epi)}
\UnaryInfC{$\eqn \Ax u  v : C\rightarrow TA$}
}\]\\
Indeed, consider the rule\\[-1mm]
\[\myproof{
\AxiomC{$\eqn{\Ax}{u_i}{v_i}: C_i\rightarrow TA \quad (i\in I)$}
\LeftLabel{$\s{Local}_\amalg$}
\UnaryInfC{$\eqn\Ax{[u_i]_{i\in I}}{[v_i]_{i\in I}}
:\coprod_{i\in I}C_i\rightarrow TA$}
}\]\\[.5mm]
and note that: 
$(i)$~the rule~$\s{Local}$ is derivable from the rules~$\s{Local}_1$
and~$\s{Local}_\amalg$, which are in turn instances of the
rule~$\s{Local}$; 
$(ii)$~the rule~$\s{Subst}_\amalg$ is derivable from the rules~$\s{Subst}$
and $\s{Local}_\amalg$;
$(iii)$~the rule~$\s{Subst}$ is an instance of the
rule~$\s{Subst}_\amalg$; and 
$(iv)$~assuming the rule~$\s{Ref}$, the rule~$\s{Local}_\amalg$ is
derivable from the rule~$\s{Subst}_\amalg$.
\end{remark}

\subsection{Soundness}

The following result states the soundness of derivability in EML.
We write $\catalg{\mathbb S}\models \eqnz u v: C\rightarrow TA$ whenever
$(X,s)\models \eqnz u v : C\rightarrow TA$ for all 
$\mathbb S$-algebras~$(X,s)$.

\begin{theorem}[Soundness]\label{EMLsoundness}
For a MES~$\mathbb S=(\ul{\cat C},\monad T,\Ax)$,
\[
\eqn \Ax u v : C\rightarrow TA
\ \ \text{ implies }\ \ 
\catalg{\mathbb{S}}\models \eqnz u v : C\rightarrow TA
\enspace .
\]
\end{theorem}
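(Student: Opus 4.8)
The strategy is the standard induction on the structure of EML derivations: for each inference rule I verify that the conclusion is satisfied by every $\mathbb{S}$-algebra, assuming (for rules with premises) that the premises are. Since satisfaction is defined equationally via the interpretation $\llrrbrk{-}_{(X,s)}$, soundness of a rule amounts to showing an equality of the two interpretations factors through, or is preserved by, the relevant categorical operation.

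The plan is to dispatch the cases in order of difficulty. First, the equality rules $\s{Ref}$, $\s{Sym}$, $\s{Trans}$ are immediate: satisfaction is defined as an honest equality of morphisms in $\ul{\cat C}$, so these inherit reflexivity, symmetry, and transitivity of equality of morphisms. The rule $\s{Axiom}$ is trivial, since by definition an $\mathbb{S}$-algebra satisfies every equation in $\Ax$. The rule $\s{Local}$ is handled by the jointly-epi hypothesis: if $\llrrbrk{u\comp e_i} = \llrrbrk{v\comp e_i}$ for all $i$, then — after checking that $\llrrbrk{w\comp e_i}_{(X,s)} = \llrrbrk{w}_{(X,s)}\comp([A,X]\tensor e_i)$, which follows by functoriality of $[A,X]\tensor(-)$ applied to the defining composite — the family $\setof{[A,X]\tensor e_i}_{i\in I}$ is again jointly epi (tensoring preserves joint epimorphisms, as $[A,X]\tensor(-)$ is a left adjoint by closedness), forcing $\llrrbrk{u} = \llrrbrk{v}$.

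The substantive work lies in $\s{Subst}$ and $\s{Ext}$, where I must trace the interpretation through the monad structure and strength. For $\s{Ext}$, I would show that $\llrrbrk{\tensorext{V}{w}}_{(X,s)}$ can be rewritten, using naturality of the strength $\st$ and the coherence of $[-,=]$ with $\tensor$, as a composite built from $\llrrbrk{w}_{(X,s)}$ precomposed with a canonical map involving $V$; then equality of the $u$- and $v$-interpretations propagates directly. For $\s{Subst}$, the key identity is that interpreting the Kleisli composite $u_1\subst{u_2}$ amounts to composing interpretations, using that $(X,s)$ is an \emph{Eilenberg–Moore} algebra — so the $\mu$ and $\eta$ laws, together with naturality of $\st$ and the interaction of $\st$ with $\mu$, let the two applications of $s$ collapse correctly. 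I expect $\s{Subst}$ to be the main obstacle: the diagram chase must reconcile the Kleisli multiplication $\mu_A\comp T(u_2)$ with the two nested interpretations and the evaluation map $\epsilon^A_X$, and getting the strength-multiplication coherence and the algebra axiom $s\comp\mu_X = s\comp Ts$ to line up is where the real bookkeeping occurs. Once $\llrrbrk{u_1\subst{u_2}}_{(X,s)}$ is exhibited as a function of $\llrrbrk{u_1}_{(X,s)}$ and $\llrrbrk{u_2}_{(X,s)}$, substituting equals for equals gives the conclusion.
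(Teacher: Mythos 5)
Your proposal is correct and follows essentially the same route as the paper's proof: a case analysis over the EML rules in which $\s{Ref}$, $\s{Sym}$, $\s{Trans}$, $\s{Axiom}$ are immediate, $\s{Local}$ is handled by the identity $\llrrbrk{u\comp e}_{(X,s)}=\llrrbrk{u}_{(X,s)}\comp([A,X]\tensor e)$ together with preservation of jointly epimorphic families by $[A,X]\tensor(-)$, and $\s{Subst}$ and $\s{Ext}$ rest on exhibiting $\llrrbrk{u_1\subst{u_2}}_{(X,s)}$ and $\llrrbrk{\tensorext{V}{u}}_{(X,s)}$ as composites factoring through $\llrrbrk{u_1}_{(X,s)}$, $\llrrbrk{u_2}_{(X,s)}$ and $\llrrbrk{u}_{(X,s)}$ respectively, which is exactly the paper's pair of key lemmas. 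Your added justification that tensoring preserves joint epimorphisms because $[A,X]\tensor(-)$ is a left adjoint is a correct filling-in of a fact the paper states without proof.
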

\proof 
See Appendix~\ref{EMLsoundnessproof}.
\qed 

\section{Internal Completeness}
\label{InternalCompletenessSection}

In this section we build a mathematical basis for investigating completeness.
Our main tools are an internal completeness result for MESs that admit free
algebras together with an inductive method for constructing them. 

\subsection{Internal Completeness}
Let $\mathbb S=(\ul{\cat C},\monad T,\Ax)$ be a MES admitting free algebras;
that is, such that the forgetful 
functor~${U_\mathbb{S}:\catalg{\mathbb S}\rightarrow \cat C}$ has a left
adjoint.  We denote the free \mh{\mathbb S} algebra on ${X\in\cat C}$ as 
$(T_{\mathbb S}X,
  \tau^{\mathbb S}_X: TT_{\mathbb S}X\rightarrow T_\mathbb{S}X)$,
and the associated \emph{free \mh{\mathbb S}algebra monad} as 
${\monad T}_{\mathbb S} = (T_{\mathbb S}, \eta^{\mathbb S}, \mu^{\mathbb S})$.
Then, the 
embedding~$\catalg{\mathbb S}\, \rightembedding \cat C^{\monad T}$ induces
a strong monad 
morphism~${\qt^\mathbb{S}:\monad T\rightarrow \monad T_{\mathbb S}}$.
This has components referred to as \emph{quotient maps} that are
characterised by being the unique 
morphisms~$\qt^\mathbb{S}_X: TX\rightarrow T_\mathbb{S}X$ for which the
diagram
\begin{equation}
\label{eqn:quotient-map}
\begin{minipage}{\textwidth}
\xymatrix@C=4pc{
  TTX
  \ar[r]^-{T(\qt^\mathbb{S}_X)}
  \ar[d]_-{\mu_X}
  &
  TT_{\mathbb S}X
  \ar[d]^-{\tau^{\mathbb S}_X}
  \\
  TX
  \ar[r]^-{
\qt^\mathbb{S}_X}
  &
  T_{\mathbb S}X
  \\
  X
  \ar[u]^-{\eta_X}
  \ar[ru]_-{\eta^{\mathbb S}_X}
}
\end{minipage}
\end{equation}
commutes.
\hide{
The family of maps $\setof{\tau^\mathbb{S}_X}_{X\in\cat C}$ and
$\setof{\qt^\mathbb{S}_X}_{X\in\cat C}$ are natural in $X$.}
In this situation, we have a form of \emph{strong completeness} stating that
an equation is satisfied in all models if and only if it is satisfied in a
freely generated one, if and only if it is identified by the quotient map.
\begin{theorem}[Internal completeness]
\label{thm:int-comp}
For a MES $\mathbb S = (\ul{\cat C},\monad T,\Ax)$ admitting free algebras, the
following are equivalent. 
\medskip
\begin{enumerate}[\em(1)]
\item\label{thm:int-comp-1}
$\catalg{\mathbb S} \models \eqnz u v:C\rightarrow TA$.\\ 

\item\label{thm:int-comp-2}
$(T_\mathbb{S} A,\tau^\mathbb{S}_A) \models \eqnz u v:C\rightarrow TA$.\\

\item\label{thm:int-comp-3}
$\qt^\mathbb{S}_A\comp u \;=\; \qt^\mathbb{S}_A\comp v:
C\rightarrow T_\mathbb{S} A$.\\
\end{enumerate}
\end{theorem}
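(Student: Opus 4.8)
The plan is to prove the cycle $(1)\Rightarrow(2)\Rightarrow(3)\Rightarrow(1)$. The first implication is immediate, since the free algebra $(T_{\mathbb S}A,\tau^{\mathbb S}_A)$ is itself an \mh{\mathbb S}{algebra}, so any equation holding in all \mh{\mathbb S}{algebra}s holds in this particular one. The engine of the other two implications is the identity
\[
\qt^{\mathbb S}_X \;=\; \tau^{\mathbb S}_X \comp T(\eta^{\mathbb S}_X) \colon TX \rightarrow T_{\mathbb S}X,
\]
which I would read off directly from the defining diagram~\eqref{eqn:quotient-map}: rewrite $\eta^{\mathbb S}_X$ as $\qt^{\mathbb S}_X\comp\eta_X$ using the lower triangle, apply the upper square $\tau^{\mathbb S}_X\comp T(\qt^{\mathbb S}_X)=\qt^{\mathbb S}_X\comp\mu_X$, and collapse with the monad law $\mu_X\comp T(\eta_X)=\id$.

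For $(2)\Rightarrow(3)$ I would feed the generic valuation into the interpretation. Let $n\colon I\rightarrow[A,T_{\mathbb S}A]$ be the transpose of $\eta^{\mathbb S}_A\colon A\rightarrow T_{\mathbb S}A$ along $({-}\tensor A)\dashv[A,{-}]$, so that $\epsilon^A_{T_{\mathbb S}A}\comp(n\tensor A)=\eta^{\mathbb S}_A\comp\lambda_A$. The computation to carry out is
\[
\llrrbrk{t}_{(T_{\mathbb S}A,\tau^{\mathbb S}_A)} \comp (n\tensor C) \;=\; \qt^{\mathbb S}_A \comp t \comp \lambda_C ,
\]
obtained by unfolding $\llrrbrk{t}$, moving $n$ inward by naturality of the strength in its first argument, discharging the unit via the coherence $T(\lambda_A)\comp\st_{I,A}=\lambda_{TA}$, using the defining property of $n$, and finally invoking the identity of the previous paragraph. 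Granting this, hypothesis~(2) makes the two interpretations equal, so precomposing with $n\tensor C$ and cancelling the isomorphism $\lambda_C$ yields $\qt^{\mathbb S}_A\comp u=\qt^{\mathbb S}_A\comp v$, which is~(3).

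For $(3)\Rightarrow(1)$ I would show that the interpretation in an arbitrary \mh{\mathbb S}{algebra} $(X,s)$ factors through $\qt^{\mathbb S}_A$. Let $s^{\mathbb S}\colon T_{\mathbb S}X\rightarrow X$ be the carrier of the counit $(T_{\mathbb S}X,\tau^{\mathbb S}_X)\rightarrow(X,s)$ of the free/forgetful adjunction; since this counit is a $T$-algebra morphism and satisfies $s^{\mathbb S}\comp\eta^{\mathbb S}_X=\id$, the identity of the first paragraph gives $s=s^{\mathbb S}\comp\qt^{\mathbb S}_X$. Substituting this into $\llrrbrk{t}_{(X,s)}=s\comp T(\epsilon^A_X)\comp\st_{[A,X],A}\comp([A,X]\tensor t)$ and then pushing $\qt^{\mathbb S}$ rightwards, through $T(\epsilon^A_X)$ by naturality and through $\st$ by the strong-monad-morphism law $\qt^{\mathbb S}_{V\tensor A}\comp\st_{V,A}=\st^{\mathbb S}_{V,A}\comp(V\tensor\qt^{\mathbb S}_A)$, gives
\[
\llrrbrk{t}_{(X,s)} \;=\; s^{\mathbb S} \comp T_{\mathbb S}(\epsilon^A_X) \comp \st^{\mathbb S}_{[A,X],A} \comp \big([A,X]\tensor(\qt^{\mathbb S}_A\comp t)\big),
\]
where $\st^{\mathbb S}$ is the strength of $\monad T_{\mathbb S}$. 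As the right-hand side depends on $t$ only through $\qt^{\mathbb S}_A\comp t$, hypothesis~(3) forces $\llrrbrk u_{(X,s)}=\llrrbrk v_{(X,s)}$ for every \mh{\mathbb S}{algebra} $(X,s)$, i.e.~(1).

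The routine parts are the naturality and coherence manipulations; the one genuinely load-bearing ingredient is that $\qt^{\mathbb S}$ is a \emph{strong} monad morphism, since it is precisely the compatibility of $\qt^{\mathbb S}$ with the two strengths that licenses the factorisation in $(3)\Rightarrow(1)$. The main care needed is bookkeeping: tracking the unitors and keeping $\st$ and $\st^{\mathbb S}$ distinct throughout the two displayed computations.
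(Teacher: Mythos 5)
Your proposal is correct and takes essentially the same route as the paper's proof: the same cycle $(1)\Rightarrow(2)\Rightarrow(3)\Rightarrow(1)$, with $(2)\Rightarrow(3)$ obtained by evaluating the free-algebra interpretation at the transpose of $\eta^{\mathbb S}_A\comp\lambda_A$ (the paper's $\widehat{\mathbf{p}}$, your $n$), and $(3)\Rightarrow(1)$ by factoring $\llrrbrk{t}_{(X,s)}$ through $\qt^{\mathbb S}_A\comp t$ via the homomorphic extension of the identity (the paper's $s^*$, your $s^{\mathbb S}$) and the strong-monad-morphism property of $\qt^{\mathbb S}$. Your explicit identity $\qt^{\mathbb S}_X=\tau^{\mathbb S}_X\comp T(\eta^{\mathbb S}_X)$ and the consequent $s=s^{\mathbb S}\comp\qt^{\mathbb S}_X$ just spell out bookkeeping that the paper's proof leaves implicit.
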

\proof 
See Appendix~\ref{IntCompApp}.
\qed 

\subsection{Free Algebras}
%
We now establish a general setting in which to apply the internal
completeness theorem.  Indeed, we give conditions under which MESs admit
free algebras and provide an inductive construction of quotient
maps~(see~\citet{FioreHurTCS} for details).

\begin{definition}
  Let $\ul{\cat C}$ be a symmetric monoidal closed category.  An object $A$ in
  $\cat C$ is respectively said to be \emph{compact} and \emph{projective} if
  the endofunctor ${[A,-]}$ on $\cat C$ respectively preserves colimits of 
  \mbox{$\omega$-chains} and epimorphisms.  
\end{definition}

\begin{definition}
  A MES ${\mathbb S =(\ul{\cat C},\monad T,\Ax)}$ is called \emph{finitary} if
  the category $\cat C$ is cocomplete, the endofunctor $T$ on $\cat C$ is
  \mbox{$\omega$-cocontinuous}, and the arity $A$ of each equation 
  $\eqnz u v:C\rightarrow TA$ in $\Ax$ is compact.  Such a MES is
  called \emph{inductive} if furthermore the endofunctor~$T$ preserves 
  epimorphisms and the arity $A$ of each equation 
  $\eqnz u v:C\rightarrow TA$ in $\Ax$ is projective.
\end{definition}

For a finitary MES $\mathbb S = (\ul{\cat C},\monad T,\Ax)$ we have the
following situation: 
\[
\xymatrix@C=3pc{
\catalg{\mathbb{S}}\
\ar@<-5pt>@{^(->}[r]_-{J} 
\ar@{}@<2pt>[r]|-{\bot}
& \catAlg{\cat C}{\monad T}
\ar@<2pt>[d]^(.5){U_{\monad T}} 
\ar@{}@<-3pt>[d]|-{\dashv}
\ar@/_10pt/[l]_-{K}
\\
& \cat C  
\ar@/^10pt/[u]^-{}
}
\]
For each object $X\in\cat C$, since $(TX,\mu_X)$ is a free Eilenberg-Moore
algebra on $X$, the free \mh{\mathbb S} algebra
$(T_\mathbb{S}X,\tau^\mathbb{S}_X)$ on $X$ is given by the free
\mh{\mathbb S} algebra $K(TX,\mu_X)$ over the Eilenberg-Moore algebra
$(TX,\mu_X)$.  Satisfying the commutative diagram~(\ref{eqn:quotient-map}),
the universal homomorphism 
$(TX,\mu_X)\rightarrow (T_\mathbb{S}X,\tau^\mathbb{S}_X)$ induced by the
adjunction $K \dashv J$ yields the quotient 
map~$\qt^\mathbb{S}_X:TX \rightarrow T_{\mathbb S}X$.

In the case of inductive MESs, the quotient maps $\qt^\mathbb{S}_X$ are
constructed as follows:
\begin{equation}
\label{eqn:qt-con}
\begin{minipage}{.9\textwidth}
\hspace*{-1.2pc}
\xymatrix@C=1.7pc{
\save[]+<-0.2pc,-0.5pc>
*{\mbox{\small${\forall\,(\eqnz u v:C \rightarrow TA) \in \Ax}$}}
\restore
&
T(TX)
\ar[rd]^-{p_0}
\ar@{->>}[r]^-{T(q_0)}
\ar[d]_(.45){\mu_X} 
\ar@{}[rrd]|-{\textsf{po}}
& 
T(TX)_1 
\ar[dr]^-{p_1} 
\ar@{->>}[r]^{T(q_1)} 
\ar@{}[rrd]|-{\textsf{po}}
&
T(TX)_2
\ar@{->>}[r]^{T(q_2)}  
\ar[dr]^-{p_2} 
& 
T(TX)_3 
\ar@{}[r]|-{\cdots} 
& 
T(T_\mathbb{S}X) 
\ar@{-->}[d]^-{\tau^\mathbb{S}_X}  
\\
\save[]+<-2pc,0pc>
*+<6pt,6pt>{[A,TX]\tensor C}
\ar@<3pt>[r]^-{\algstr{\llrrbrk{u}}{(TX,\mu_X)}}
\ar@<-3pt>[r]_-{\algstr{\llrrbrk{v}}{(TX,\mu_X)}}
\ar@{}@<10pt>[r]^(0)*{\vdots}
\ar@{}@<-2pt>[r]_(0)*{\vdots}
\restore
{\mbox{}\hspace*{6pc}\mbox{}}
&
TX
\ar@{->>}@<-4pt>@/_20pt/[rrrr]|-{\ \qt^\mathbb{S}_X\ }
\ar@{->>}[r]^-{q_0} 
\ar@{}[r]_-{\s{coeq}} 
& 
(TX)_1 
\ar@{->>}[r]^-{q_1} 
& 
(TX)_2 
\ar@{->>}[r]^-{q_2} 
& 
(TX)_3 
\ar@{}[r]|-{\cdots} 
\ar@{}@<-7pt>[r]_(1){\textsf{colim}} 
& 
T_\mathbb{S}X 
}
\end{minipage}
\end{equation}
where $q_0$ is the universal map that jointly coequalizes every pair
$\algstr{\llrrbrk u}{(TX,\mu_X)}$ and $\algstr{\llrrbrk v}{(TX,\mu_X)}$ with  
$(\eqnz u v: C\rightarrow TA)\in \Ax$ and where, for all $n\geq1$, the
cospans
\[
(TX)_n 
\xymatrix{\ar@{->>}[r]^-{q_n}&}
(TX)_{n+1}
\xymatrix{&\ar[l]_-{p_n}
T(TX)_n}
\] 
are pushouts of the spans
\[
(TX)_n
\xymatrix{&\ar[l]_-{p_{n-1}}} 
T(TX)_{n-1}
\xymatrix@C=35pt{\ar@{->>}[r]^-{T(q_{n-1})}&} 
T(TX)_n
\] 
for $(TX)_0=TX$.

Moreover, when the strong monad~$\monad T$ arises from a left adjoint to a
forgetful functor~$\catalg F\rightarrow\cat C$, for $F$ a strong endofunctor
that preserves colimits of \mbox{$\omega$}-chains and epimorphisms, the
construction of the quotient maps $\qt^\mathbb{S}_X$ simplifies as follows:
\begin{equation}
\label{eqn:qt-con-simp}
\begin{minipage}{.9\textwidth}
\hspace*{-1.2pc}
\xymatrix@C=1.7pc{
\save[]+<-0.2pc,-0.5pc>
*{\mbox{\small${\forall\,(\eqnz u v:C \rightarrow TA) \in \Ax}$}}
\restore
&
F(TX)
\ar[rd]^-{p_0}
\ar@{->>}[r]^-{F(q_0)}
\ar[d]_(.45){\wh{\mu}_X} 
\ar@{}[rrd]|-{\textsf{po}}
& 
F(TX)_1 
\ar[dr]^-{p_1} 
\ar@{->>}[r]^{F(q_1)} 
\ar@{}[rrd]|-{\textsf{po}}
&
F(TX)_2
\ar@{->>}[r]^{F(q_2)}  
\ar[dr]^-{p_2} 
& 
F(TX)_3 
\ar@{}[r]|-{\cdots} 
& 
F(T_\mathbb{S}X) 
\ar@{-->}[d]^-{\wh{\tau}^\mathbb{S}_X}  
\\
\save[]+<-2pc,0pc>
*+<6pt,6pt>{[A,TX]\tensor C}
\ar@<3pt>[r]^-{\algstr{\llrrbrk{u}}{(TX,\mu_X)}}
\ar@<-3pt>[r]_-{\algstr{\llrrbrk{v}}{(TX,\mu_X)}}
\ar@{}@<10pt>[r]^(0)*{\vdots}
\ar@{}@<-2pt>[r]_(0)*{\vdots}
\restore
{\mbox{}\hspace*{6pc}\mbox{}}
&
TX
\ar@{->>}@<-4pt>@/_20pt/[rrrr]|-{\ \qt^\mathbb{S}_X\ }
\ar@{->>}[r]^-{q_0} 
\ar@{}[r]_-{\s{coeq}} 
& 
(TX)_1 
\ar@{->>}[r]^-{q_1} 
& 
(TX)_2 
\ar@{->>}[r]^-{q_2} 
& 
(TX)_3 
\ar@{}[r]|-{\cdots} 
\ar@{}@<-7pt>[r]_(1){\textsf{colim}} 
& 
T_\mathbb{S}X 
}
\end{minipage}
\end{equation}
where $(TX,\wh{\mu}_X)$ and $(T_\mathbb{S}X,\wh{\tau}^\mathbb{S}_X)$ are 
the \mh{F} algebras respectively corresponding to 
the  Eilenberg-Moore algebras
$(TX,\mu_X)$ and $(T_\mathbb{S}X,\tau^\mathbb{S}_X)$ for
the monad $\monad T$.
(Explicit calculations of this construction feature in
Sections~\ref{AlgThCompleteness} and~\ref{NominalCompleteness}.)

\section*{\ul{\large Part~II. Methodology
}}

In view of the mathematical development of Part~I, we advocate the following
methodology for synthesising \mbox{mono-sorted} equational logics.
\medskip
\begin{enumerate}[(1)]
  \item
    {\em Select a symmetric monoidal closed category~$\ul{\cat C}$ as universe
    of discourse and consider within it a syntactic notion of signature such
    that every signature~$\Sigma$ gives rise to a strong monad~$\monad
    T_\Sigma$ on $\ul{\cat C}$.}

\medskip\noindent
    The universe of discourse should be carefully chosen to consist of
    mathematical objects with enough internal structure to allow for the
    algebraic realisation of the syntactic constructs that one is 
    modelling. 

\medskip\noindent
    We do not insist on an a~priori prescription for the definition of
    signature, but rather consider it as being domain specific.  Of course,
    standard notions of signature~(\eg~as 
    in 
    enriched algebraic theories---see~\cite{KellyPower93},
    and~\citet{Robinson02}) may be considered.  However, one may need to go
    beyond them---see~\citet{Fiore08} and~\citet{FioreHurSOEqLog}.

\medskip
  \item
    {\em Select a class of coarity-arity pairs~$(C,A)$ of objects of
    $\cat C$ and give a syntactic description of Kleisli 
    maps~$C\rightarrow T_\Sigma A$.  This yields a syntactic notion of
    equational presentation with an associated model theory arising from that
    of MESs.}

\medskip\noindent
    We are ultimately interested in constructing free algebras for equational
    presentations.  In the context of finitary algebraic theories, it is thus
    appropriate to consider a cocomplete universe of discourse together with
    signatures for which the associated monad preserves colimits of
    \mbox{$\omega$-chains} and epimorphisms, and arities that are compact and
    projective; so that the induced MESs are inductive.

\medskip
  \item
    {\em Synthesise a deductive system for equational reasoning on syntactic
    terms with rules arising as syntactic counterparts of the EML~rules
    associated to the MES.}

\medskip\noindent
    The analysis of the rule~$\s{Subst}$ will typically involve the
    consideration of a syntactic substitution operation corresponding to
    Kleisli composition.

\medskip
  \item
    {\em Analyse the inductive construction of free algebras and obtain an
    intermediate deductive system characterising the equivalence induced by
    the quotient maps.  Embed the intermediate deductive system within the
    synthesised equational logic and conclude the completeness of the latter
    as a consequence of the internal completeness result.}

\medskip\noindent
    In practise, we have found that the intermediate deductive system is not
    only easily embeddable in the synthesised equational logic but that it
    moreover allows one to distil a rewriting-style deduction system that
    provides a sound and complete computational treatment of derivability.
\end{enumerate}

\medskip
The resulting equational logics are thus synthesised from algebraic
metatheories by means of first principles.  Two sample applications of this
methodology follow.

\section{Synthetic Equational Logic}

\subsection{MESs for algebraic theories}

Recall that an \emph{algebraic theory}~$\mathbb{T}=(\Sigma,E)$ is given by a
signature~$\Sigma$, consisting of a set of operators~$O$ and an arity function
$\U-:O_\Sigma\rightarrow\Nat$, together with a set of equations~$E$.
Algebraic theories may be encoded as MESs as follows.

The signature~$\Sigma$ induces the 
endofunctor~$F_\Sigma(X) = \coprod_{\opr\in O} X^{\U \opr}$ on $\Set$, for
which the category of \mbox{$\Sigma$-algebras}, $\catalg{\Sigma}$, and the
category of \mbox{$F_\Sigma$-algebras}, $\catalg{F_\Sigma}$, are isomorphic.
The forgetful functor~$\catalg{F_\Sigma} \rightarrow \Set$ is monadic and the
induced \emph{term monad}~$\monad{T}_\Sigma =
(T_\Sigma,\eta^\Sigma,\mu^\Sigma)$ is given syntactically.  For a set of
variables~$V$, the set~$T_\Sigma(V)$ consists of terms built up from the
variables in~$V$ and the operators in $O$.

The endofunctor~$F_\Sigma$ has a canonical strength~$\st:U\times F_\Sigma(V)
\rightarrow F_\Sigma(U\times V)$ mapping a
pair~$(u,\iota_{\opr}(v_1,\ldots,v_{\U{\opr}}))$ to
$\iota_{\opr}((u,v_1),\ldots,(u,v_{\U{\opr}}))$, where we use the
notation~$\iota$ for coproduct injections.
The induced strength on the monad~$\monad{T}_\Sigma$, 
$\wh{\st}:U\times T_\Sigma(V) \rightarrow T_\Sigma(U\times V)$, 
maps a pair~$(u,t)$ to the term~$t\subst{v\mapsto (u,v)}_{v\in V}$
obtained by simultaneously substituting $(u,v)$ for each 
variable~$v \in V$ in the term~$t$.

By definition, each equation~$(\eqn V  l r)$ in $E$ is given by a pair of
terms~${l,r \in T_\Sigma(V)}$, or equivalently, by a parallel pair of Kleisli
maps~$\toTES{l},\toTES{r}: 1 \rightarrow T_\Sigma (V)$.  Thus, one can encode
the algebraic theory~$\mathbb T$ as the MES
$\toTES{\mathbb{T}}=(\Set,\monad{T}_\Sigma,\toTES{E})$ with the set of
equations $\toTES{E}$ given by
$\setof{ \eqnz{\toTES{l}}{\toTES{r}}:1\rightarrow T_\Sigma V 
         \suchthat
         (\eqn V l r) \in E}$.
The MES $\toTES{\mathbb T}$ is inductive.

\subsection{Model theory}

A \mh{\toTES{\mathbb T}} algebra is an Eilenberg-Moore 
algebra~$(X,s:T_\Sigma X\rightarrow X)$ for $\monad T_\Sigma$ 
such that the diagram 
\[
\!\!
\xymatrix@C=60pt@R=0pt{
X^V\times 1 
\ar@<3pt>[r]^-{X^V\times \toTES{t_1}} 
\ar@<-3pt>[r]_-{X^V\times \toTES{t_2}} 
& 
}
\!\!\!
\xymatrix@C=45pt@R=0pt{
X^V\times T_\Sigma V
\ar[r]^-{\wh{\st}_{X^V,V}}
&
T_\Sigma\big(X^V\times V\big) 
}
\!\!\!
\xymatrix@C=35pt@R=0pt{
\ar[r]^-{T_\Sigma(\epsilon^V_X)} 
& 
T_\Sigma X 
}
\!\!\!
\xymatrix@C=15pt@R=0pt{
\ar[r]^-{s} 
& 
X
}
\]
commutes for every equation $(\eqn V {t_1}{t_2})$ in $E$; that is, such
that
\begin{equation}
\label{eqn:alg-thy-tes-cond}
\xymatrix@C=2.5pc{
  1
  \ar@<3pt>[r]^-{\toTES{t_1}}
  \ar@<-3pt>[r]_-{\toTES{t_2}}
  &
  T_\Sigma V
  \ar[r]^-{T_\Sigma (v)}
  &
  T_\Sigma X
  \ar[r]^-{s}
  &
  X
}
\end{equation}
commutes for all functions $v:V\rightarrow X$.

Write $(X,\llrrbrk{-})$ for the Eilenberg-Moore algebra of the 
monad~$\monad T_\Sigma$ corresponding to the 
\mh{\Sigma} algebra~$(X,\setofz{\llrrbrk{\opr}}_{\opr\in\Sigma})$ via the
isomorphism~$\catalg{\Sigma}\iso\catAlg{\cat C}{\monad T_\Sigma}$.  
We have that the Eilenberg-Moore algebra~$(X,\llrrbrk{-})$
satisfies~(\ref{eqn:alg-thy-tes-cond}) if and only if the 
\mh{\Sigma} algebra~$(X,\setofz{\llrrbrk{\opr}}_{\opr\in\Sigma})$ satisfies
the equation~$(\eqn V {t_1} {t_2})$.  It follows thus that
$\catalg{\toTES{\mathbb T}}$ is isomorphic to the 
category~$\catalg{\mathbb T}$ of algebras for the algebraic 
theory~$\mathbb T$.

\subsection{EML for algebraic theories}

The EML associated to the MES of an algebraic theory~$\mathbb T = (\Sigma,
E)$ has judgements of the form\\[-1.5mm]
\[
\eqn{\ul E}{f}{g}:U\rightarrow T_\Sigma V
\]\\[-2.5mm]
with inference rules~$\s{Ref}$, $\s{Sym}$, $\s{Trans}$, $\s{Axiom}$,
$\s{Subst}_\amalg$, $\s{Ext}$, and~$\s{Local}_1$ (see
Section~\ref{EquationalMetalogic}).  
The rules~$\s{Ext}$ and~$\s{Local}_1$ are however redundant.  Indeed, the
subsystem~EML$_1$ with inference rules~$\s{Ref}$, $\s{Sym}$, $\s{Trans}$,
$\s{Axiom}$, and $\s{Subst}_\amalg$ restricted to judgements of the
form\\[-1mm]
\[
\eqn{\ul E}{u}{v}:1\rightarrow T_\Sigma V
\]\\[-3mm]
is such that\\[-1mm]
\begin{quote}
\begin{tabular}{rl}
& $\eqn{\ul E}{f}{g}:U\rightarrow T_\Sigma V$ is derivable in EML
\\
iff &
\\
& $\eqn{\ul E}{f\subst{\ul i}}{g\subst{\ul i}}:1\rightarrow T_\Sigma V$ is
derivable in EML$_1$ for all $i\in U$.
\end{tabular}
\end{quote}

\subsection{Synthetic Equational Logic}

A Synthetic Equational Logic~(SEL) for algebraic 
theories~$\mathbb T = (\Sigma, E)$ directly arises as the syntactic
counterpart of EML$_1$.  SEL has judgements 
\[
V \vdash_E s \equiv t 
\quad
(s,t\in T_\Sigma V)
\]
and consists of the following rules:\\
\begin{equation*}
\label{eqn:tel-for-algthy}
\begin{minipage}{.9\textwidth}
\mbox{}\hfill$
\begin{array}{c}
\myproof{
\AxiomC{$\phantom{V}$}
\LeftLabel{$\s{Ref}$}
\RightLabel{}
\UnaryInfC{$V \vdash_E t \equiv t$}
}
\qquad
\myproof{
\AxiomC{$V\vdash_E {t}\equiv{t'}$}
\LeftLabel{$\s{Sym}$}
\UnaryInfC{${V}\vdash_E{t'}\equiv{t}$}
}
\qquad
\myproof{
\AxiomC{${V}\vdash_E{t}\equiv{t'}\quad {V}\vdash_E{t'}\equiv{t''}$}
\LeftLabel{$\s{Trans}$}
\UnaryInfC{${V}\vdash_E{t}\equiv{t''}$}
}
\\[6mm]
\myproof{
\AxiomC{$({V}\vdash{l}\equiv{r}) \in E$}
\LeftLabel{$\s{Axiom}$}
\RightLabel{}
\UnaryInfC{${V}\vdash_E{l}\equiv{r}$}
}
\\[6mm]
\myproof{
\AxiomC{$
{U}\vdash_E{t}\equiv{t'}
\qquad
{V}\vdash_E{s_u}\equiv{s'_u}\ (u\in U)
$}
\LeftLabel{$\s{Subst}$}
\UnaryInfC{$
{V} \vdash_E
{t\subst{u\mapsto s_u}_{u\in U} \,}
\equiv
{\, t'\subst{u\mapsto s'_u}_{u\in U}}
$}
}
\end{array}
$\hfill\mbox{}
\end{minipage}
\end{equation*}\\[1mm]
In the rule~$\s{Subst}$, the term $t\subst{u\mapsto s_u}_{u\in U}$ is
obtained by simultaneously substituting the terms~$s_u$ for the variables
$u\in U$ in the term $t$.

\subsection{Soundness}

Note that SEL subsumes the usual presentation of Equational Logic, where
the substitution rule is restricted to families~${s_u}={s'_u}~(u\in U)$
and a congruence rule for operators is added.
Furthermore, since $V\vdash_E s\equiv t$ is derivable in SEL iff 
$\eqn{\ul E}{\ul s}{\ul t}:1\rightarrow T_\Sigma V$ is derivable in
EML$_1$ iff $\eqn{\ul E}{\ul s}{\ul t}:1\rightarrow T_\Sigma V$ is
derivable in EML, the well-known soundness of SEL follows from the
soundness of EML.

\subsection{Completeness}
\label{AlgThCompleteness}

We proceed to show how the internal completeness 
theorem 
and the construction of free algebras for 
inductive~MESs
~(see Section~\ref{InternalCompletenessSection}) 
lead to equational derivability and bidirectional rewriting completeness
results.

Consider the construction~(\ref{eqn:qt-con-simp}) for the 
MES~$\toTES{\mathbb{T}}$.
The map~$q_0:T_\Sigma X \epirightarrow (T_\Sigma X)_1$ is the universal map in
$\Set$ that coequalizes every pair 
$\llrrbrk{l},\llrrbrk{r}:(T_\Sigma X)^V\rightarrow T_\Sigma X$ for all
${(\eqn{V}{l}{r})\in E}$, where $\llrrbrk{t}$ maps 
$s \in (T_\Sigma X)^V$ to $t\subst{v \mapsto s_v}_{v\in V}$.
It follows that the set $(T_\Sigma X)_1$ is given by the 
quotient~$T_\Sigma X/_{\approx_1}$ of $T_\Sigma X$ under the equivalence
relation~$\approx_1$ generated by the rule:\\
\[
\myproof{
  \AxiomC{$(\eqn{V}{l}{r})\in E$}
  \RightLabel{$\big(s \in (T_\Sigma X)^V\big)$}
  \UnaryInfC{$
    l\subst{v \mapsto s_v}_{v\in V} 
    \approx_1
    r\subst{v \mapsto s_v}_{v\in V}
    $}
}
\]\\
The map~$q_0$ sends a term~$t\in T_\Sigma X$ to 
its equivalence class $[t]_{\approx_1}\in T_\Sigma X/_{\approx_1}$,
and the map~$p_0$ sends 
$\iota_{\opr} (t_1,\ldots,t_{\U\opr}) \in F_\Sigma(T_\Sigma X)$
to $[\opr(t_1,\ldots,t_{\U\opr})]_{\approx_1} \in T_\Sigma X/_{\approx_1}$.

Recall that a pushout 
\[
\xymatrix{\ar@{}[rrd]|-{\mathsf{po}}
  A
  \ar@{->>}[r]^-{e}
  \ar[rd]_-{f}
  &
  B
  \ar[rd]^-{f'}
  \\
  &
  C
  \ar@{->>}[r]^-{e'}
  &
  C/_{\approx}
}
\]
of a surjective map~$e:A\epirightarrow B$ and a map~$f:A\rightarrow C$ in
$\Set$ can be constructed as the 
cospan~$e':C\epirightarrow C/_{\approx} \leftarrow B:f'$, where 
$C/_{\approx}$ is the quotient of $C$ under the equivalence
relation~$\approx$ generated by setting~$f(a) \approx f(a')$ for all 
$a,a'\in A$ such that $e(a) = e(a')$ in $B$, and where the surjective
map~$e':C\epirightarrow C/_{\approx}$ sends an element $c$ to its equivalence
class~$[c]_{\approx}$ and the map~$f':B \rightarrow C/_{\approx}$ sends an
element $b$ to~$e'(f(a))$ for $a\in A$ such that $e(a)=b$.

Using this construction, an inductive analysis of the maps~$q_n$ for~$n
\ge 1$ shows that the sets $(T_\Sigma X)_n$ for~$n \ge 2$ are given as the
quotients~$T_\Sigma X/_{\approx_n}$ of $T_\Sigma X$ under the equivalence
relations~$\approx_n$ inductively generated by the following rules:\\
\[
\myproof{
  \AxiomC{$
    s \approx_{n-1} s'
    $}  
  \UnaryInfC{$
    s \approx_n s'
    $}
}
\qquad\qquad
\myproof{
  \AxiomC{$
    s_1 \approx_{n-1} s'_1,\ \ldots,\  
    s_{\U\opr} \approx_{n-1} s'_{\U\opr}
    $}  
  \RightLabel{$
    (\opr \in \Sigma)
    $}
  \UnaryInfC{$
    \opr(s_1,\ldots,s_{\U\opr}) 
    \approx_n
    \opr(s'_1,\ldots,s'_{\U\opr}) 
    $}
}
\]\\
The maps~$q_n$ for $n\ge 1$ send $[t]_{\approx_n}\in T_\Sigma X/_{\approx_n}$
to $[t]_{\approx_{n+1}} \in T_\Sigma X/_{\approx_{n+1}}$, and the maps $p_n$
for $n\ge 1$ send $\iota_{\opr}
([t_1]_{\approx_n},\ldots,[t_{\U\opr}]_{\approx_n}) \in F_\Sigma(T_\Sigma
X/_{\approx_n})$ to $[\opr(t_1,\ldots,t_{\U\opr})]_{\approx_{n+1}} \in
T_\Sigma X/_{\approx_{n+1}}$.

By taking the colimit of the chain of quotients 
${\setof{q_n:T_\Sigma X/_{\approx_n} 
  \epirightarrow 
  T_\Sigma X/_{\approx_{n+1}}}_{n\ge 0}}$, 
the set $T_\mathbb{S} X$ is given by the quotient
$T_\Sigma X/_{\approx_E}$ of $T_\Sigma X$
under the relation $\approx_E$ generated by the following rules:
\begin{equation}
\label{eqn:rw-for-algthy}
\begin{minipage}{.9\textwidth}
\mbox{}\hfill$
\begin{array}{c}
\myproof{
  \AxiomC{$\phantom{t}$}
  \LeftLabel{$\s{Ref}$}
  \RightLabel{}
  \UnaryInfC{$
    s \approx_E s
    $}
}
\qquad\qquad
\myproof{
  \AxiomC{$
    s \approx_E s'
    $}
  \LeftLabel{$\s{Sym}$}
  \UnaryInfC{$
    s' \approx_E s
    $}
}
\qquad\qquad
\myproof{
  \AxiomC{$
    s \approx_E s' \quad s' \approx_E s''
    $}
  \LeftLabel{$\s{Trans}$}
  \UnaryInfC{$
    s \approx_E s''
    $}
}
\\[6mm]
\!\!\!\!\!\!
\myproof{
  \AxiomC{$(\eqn{V}{l}{r})\in E$}
  \LeftLabel{$\s{Inst}$}
  \RightLabel{$\big(s\in (T_\Sigma X)^V\big)$}
  \UnaryInfC{$
    l\subst{v \mapsto s_v}_{v\in V} 
    \approx_E
    r\subst{v \mapsto s_v}_{v\in V}
    $}
}
\\[6mm]
\myproof{
  \AxiomC{$
    s_1 \approx_E s'_1,\ \ldots,\  
    s_{\U\opr} \approx_E s'_{\U\opr}
    $}  
  \LeftLabel{$\s{Cong}$}
  \RightLabel{$
    (\opr \in \Sigma)
    $}
  \UnaryInfC{$
    \opr(s_1,\ldots,s_{\U\opr}) 
    \approx_E
    \opr(s'_1,\ldots,s'_{\U\opr}) 
    $}
}
\end{array}
$\hfill\mbox{}
\end{minipage}
\end{equation}
The quotient map~$\qt^\mathbb{S}_X:T_\Sigma X \epirightarrow T_\mathbb{S}X$
sends a term $t$ to its equivalence class $[t]_{\approx_E}$.

The rules~$\s{Inst}$ and $\s{Cong}$ for the relation $\approx_E$ can be
merged into a single rule to yield a rewriting-style deduction system.
Indeed, by an induction on the depth of proof trees, one shows that the
relation~$\approx_E$ on $T_\Sigma X$ coincides with the equivalence
relation~$\approx^\s{R}_E$ on $T_\Sigma X$ generated by the rewriting-style
rule\\[-2mm]
\[
\mbf{C}[l\subst{v \mapsto s_v}_{v\in V}]
\;\approx^\s{R}_E\;
\mbf{C}[r\subst{v \mapsto s_v}_{v\in V}]
\]\\[-2mm]
for $(\eqn{V}{l}{r})\in E$, $s\in (T_\Sigma X)^V$, and $\mbf{C}[-]$ a context
with one hole and possibly with variables from $X$.

From the internal completeness of the MES~$\toTES{\mathbb T}$, we have the
soundness and completeness of equational reasoning by bidirectional
rewriting: 
\[
\begin{array}{cl}
&
\hspace{-17.5mm}
\catalg{\mathbb{T}} \models (\eqn{V}{s}{t})
\\[1mm]
\Longleftrightarrow&
\catalg{\toTES{\mathbb{T}}} 
  \models (\eqnz{\toTES{s}}{\toTES{t}}:1 \rightarrow T_\Sigma V)
\\[1mm]
\Longleftrightarrow&
\qt^\mathbb{S}_V \comp \toTES{s} = \qt^\mathbb{S}_V \comp \toTES{t}
: 1\rightarrow T_\mathbb{S} V
\\[1mm]
\Longleftrightarrow&
[s]_{\approx_E} = [t]_{\approx_E} \quad\text{in } T_\Sigma V/_{\approx_E}
\\[1mm]
\Longleftrightarrow&
s \approx_E t \quad\text{in } T_\Sigma V
\\[1mm]
\Longleftrightarrow&
s \approx^\s{R}_E t \quad\text{in } T_\Sigma V
\end{array}
\]

Finally, also SEL is complete; as a proof of $s \approx_E t$ for 
${s,t\in T_\Sigma V}$ constructed by the rules
in~(\ref{eqn:rw-for-algthy}) can be turned into a proof of
${V}\vdash_E{s}\equiv{t}$ in SEL.


\section{Synthetic Nominal Equational Logic}
\label{sec:syn-nom-eqn-logic}

This section provides a novel application of our theory and methodology
for synthesising equational logics geared to the development of a
deductive system for reasoning about algebraic structure with name-binding
operators.

We consider a class of MESs, referred to as 
\emph{Nominal Equational Systems~(NESs)}, based on the category~$\Nom$ of
nominal sets~\citep[Section~6]{GabbayPitts01} (or equivalently the
Schanuel topos~\citep[Section~III.9]{MacLaneMoerdijk92}).  
These we subsequently present in syntactic form to yield \emph{Nominal
Equational Presentations~(NEPs)}.  The model theory of NEPs is of course
derived from that of NESs.

An equational logic, called 
\emph{Synthetic Nominal Equational Logic~(SNEL)}, for NEPs is derived
from the EML associated to NESs.  This is guaranteed to be sound by
construction.  Completeness is derived from the internal completeness
theorem by an analysis of the inductive construction of free algebras in
terms of equational derivability.  This approach yields two completeness
results: the rewriting completeness of an induced notion of 
\emph{Synthetic Nominal Rewriting~(SNR)} and the derivability completeness of
SNEL.

A brief discussion of related work is included.

\subsection{Nominal sets}
\label{sec:nom-set}

For a fixed countably infinite set $\Atom$ of atoms, the group $\Perm$
of finite permutations of atoms consists of the bijections on $\Atom$
that fix all but finitely many elements of $\Atom$.
A \emph{{\mh\Perm action}} 
${X=(\,\U X,\permact)}$ consists of a set $\U X$
equipped with a function
${(-)\permact(=)} : {\Perm\times \U X\rightarrow \U X}$
satisfying ${\id_{\Atom} \permact x = x}$ and
${\pi'\permact(\pi\permact x) = (\pi'\pi)\permact x}$ for all ${x\in \U X}$ 
and ${\pi,\pi'\in\Perm}$.
{\mh\Perm actions} form a category 
with morphisms ${X\rightarrow Y}$ given
by \emph{equivariant functions}; that is, 
functions ${f:\U X\rightarrow \U Y}$ such that 
${f(\pi \permact x) = \pi\permact (fx)}$
for all ${\pi\in\Perm}$ and ${x\in \U X}$.

For a {\mh\Perm action} $X$, 
a finite subset $S$ of $\Atom$ is said to
\emph{support} $x\in X$ if for all atoms~$a,a'\not\in S$, we have 
that $\atmtrans{a}{a'} \permact x = x$, where the \emph{transposition}
$\atmtrans{a}{a'}$ is the bijection that swaps $a$ and $a'$, and fixes
all other atoms. 
A \emph{nominal set} 
is a {\mh\Perm action} in which every element has finite support.
As an example, the set of atoms $\Atom$ becomes the
\emph{nominal set of atoms} 
$\NomAtom$ when equipped with the evaluation
action~$\pi\permact a = \pi(a)$.
The category $\Nom$ is the full subcategory of the category of
\mh\Perm actions consisting of nominal sets.

The supports of an element of a nominal set are closed under 
intersection, and we write $\supp X x$, or simply $\suppz x$,
for the intersection of the supports of
$x$ in the nominal set~$X$.
For instance, 
we have that $\supp{\NomAtom}{a}=\setof{a}$.
For 
elements $x$ and $y$ of two, possibly distinct, nominal sets $X$ and $Y$, we
write $x\freshfor y$ whenever $\supp X x$ and $\supp Y y$ are disjoint. 
Thus, for $a\in\NomAtom$ and 
$x\in X$, 
$a\freshfor x$ stands for $a\not\in\supp X x$; that is, 
\emph{$a$ is fresh for $x$}.

The category $\Nom$ is complete and cocomplete. 
In particular, for a possibly infinite family of nominal sets 
$\setof{X_i}_{i\in I}$,
the coproduct $\coprod_{i\in I} X_i$ 
is given by  $\U{\coprod_{i\in I} X_i}= \coprod_{i\in I} \U{X_i}$
with action ${\pi\permact\inj i(x) = \inj i (\pi\permact x)}$, where we
use the notation $\iota$ for coproduct injections;
the product
$\prod_{i\in I} X_i$, for a finite set $I$,
is given by 
$\U{\prod_{i\in I} X_i}=\prod_{i\in I} \U{X_i}$ 
with action
$\pi\permact\setofz{x_i}_{i\in I} = \setofz{\pi\permact x_i}_{i\in I}$.
As usual, we write $X^n$ for the \mbox{$n$-fold} product~$X \times \cdots
\times X$.

Further, $\Nom$ carries a symmetric monoidal closed structure
$(1,\septen, \sephom{-,=})$.
The \emph{unit}~$1$ is 
the terminal object in $\Nom$
(\ie, the singleton set consisting of the empty tuple
equipped with the unique action).
The \emph{separating tensor} $X\septen Y$ is the 
nominal subset of $X\times Y$ with underlying set given by
$\setof{(x,y)\in \U X\times\U Y \suchthat x \freshfor y}$.
We write $X^{\septen n}$ for the \mbox{$n$-fold} tensor
product~$X\septen\cdots\septen X$.  For instance, $\atmprd{n}$
consists of
$n$-tuples of distinct atoms
equipped with the pointwise action
$\pi\permact (a_1,\ldots,a_n) 
=(\pi\permact a_1,\ldots,\pi\permact a_n)$.
Note that $X^{\septen 0}$ is $1$ for any nominal set X.
Henceforth we write $\atmlst a n$, or simply $\atmlstz a$ when $n$ is
clear from the context, as a shorthand for a tuple $a_1,\ldots,a_n$ of
distinct atoms, and further write $\setofz{\atmlst a n}$ for the set
$\setofz{a_1,\ldots,a_n}$.
For pairs $\atmlsth a n,\atmlsth b n\in\atmprd{n}$ we define the
\emph{multi-transposition}~$\atmtrans{\atmlsth a n}{\atmlsth b n}$ to be a
\emph{fixed} bijection on $\Atom$ such that ${\atmtrans{\atmlsth a
n}{\atmlsth b n}(a_i) = b_i}$ for all $1\leq i \leq n$, and
${\atmtrans{\atmlsth a n}{\atmlsth b n}(c) = c}$ for all ${c \not\in
\setofz{\atmlsth a n} \cup \setofz{\atmlsth b n}}$.

The separating tensor $\septen$ is closed and the associated internal-hom
functor is denoted $\sephom{-,=}$.  
In particular, the internal homs $\atmhom{n}{X}$, for $n \in \Nat$ and
$X\in\Nom$, give rise to a notion of multi-atom~abstraction.  Indeed, the
nominal set $\atmhom n X$ has underlying set given by the
quotient~${\U{\atmprd{n}\times X}/_{\approx_\alpha}}$ under the
\emph{$\alpha$-equivalence relation}~$\approx_\alpha$ defined
as follows:
${(\atmlstz a, x) \approx_\alpha (\atmlstz b, y)}$
if and only if there exists a \emph{fresh} ${\atmlstz c\in \atmprd{n}}$
(\ie,~a tuple $\atmlstz c \in \atmprd{n}$ satisfying the condition 
$\atmlstz c \freshfor \atmlstz a, x, \atmlstz b, y$)
such that 
$\atmtrans{\atmlstz a}{\atmlstz c}\permact x = 
 \atmtrans{\atmlstz b}{\atmlstz c}\permact y$.
We write $\atmabs{\atmlstz a} x$ for the equivalence class 
$[(\atmlstz a,x)]_{\approx_\alpha}$.  The nominal set $\atmhom n X$ has
action 
$\pi\permact\atmabs{\atmlstz a} x
=
\atmabs{ \pi\permact \atmlstz a} 
{\pi\permact x}$ 
Note that $\suppz{\atmabs{\atmlstz a} x}$ is  
$\suppz{x}{\setminus}\setofz{\atmlstz a}$.

\subsection{Nominal Equational Systems}

We specify a class of MESs on $\Nom$, called 
\emph{Nominal Equational Systems~(NESs)}.

Following~\citep{CloustonPitts07} we define a 
\emph{nominal signature}~$\Sigma$ to be a family of nominal
sets~$\setof{\Sigma(n)}_{n\in\Nat}$, each of which consists of the
operators of arity~$n$.

\begin{example}\label{ex:nom-sig-lam}
The nominal signature~$\Sigma_\lambda$ for the untyped
\mbox{$\lambda$-calculus} is given by the nominal sets of operators   
\[
\Sigma_\lambda(0) = \setof{\mbs{V}_a \suchthat a\in\NomAtom} \,, \quad
\Sigma_\lambda(1) = \setof{\mbs{L}_a \suchthat a\in\NomAtom} \,, \quad
\Sigma_\lambda(2) = \setof{\mbs{A}}
\]
with action
\[
\pi \permact \mbs{V}_a = \mbs{V}_{\pi(a)}\ , \quad
\pi \permact \mbs{L}_a = \mbs{L}_{\pi(a)}\ , \quad
\pi \permact \mbs{A} = \mbs{A} \enspace .
\]
\end{example}

To each nominal signature~$\Sigma$,
we associate the strong endofunctor $(F_{\Sigma},\st^\Sigma)$ on
$\ul{\Nom} = {(\Nom,1,\septen, \sephom{-,=})}$ as follows:\\[-1mm]
\[
\begin{array}{rcccc}
F_{\Sigma}(X) &=& 
\multicolumn{3}{l}{\coprod_{n\in\Nat}\; \Sigma(n) \times X^{n}\ ,}
\\[2mm]
\st^\Sigma_{X,Y}
&:& F_\Sigma (X) \septen Y &\rightarrow& F_\Sigma(X\septen Y)
\\[1mm]
&&
\big(\inj n (\opr,x_1,\ldots,x_n),y\big) 
&\mapsto&
\inj n \big(\opr,(x_1,y),\ldots,(x_n,y)\big)
\end{array}
\]\\
for $X,Y\in\Nom$ and $n\in\Nat$, $\opr\in\Sigma(n)$, 
$x_1,\ldots,x_n\in X$, $y\in Y$.
Since $\Nom$ is cocomplete and the functor $F_{\Sigma}$ is \mh{\omega}
cocontinuous, free \mh{F_{\Sigma}} algebras exist.  The carrier of the
free \mh{F_{\Sigma}} algebra $T_\Sigma X$ on $X$ has the following
inductive syntactic description:\\[-1mm]
\begin{equation*}\label{eqn:nom-term}
\begin{array}{rclcl}
t \in T_\Sigma X
& ::= & x
&& (\, x\in X\,)
\\[1mm]
& \mid & \opr\,(t_1,\ldots,t_n)
&\quad& (\,\opr\in\Sigma(n),\; t_1,\ldots,t_n\in T_\Sigma X\,)
\end{array}
\end{equation*}\\
with action 
given by 
$\pi\permact x = \pi \permact_{X} x$
and 
$\pi\permact \opr(t_1,\ldots,t_n) = 
(\pi \permact\opr)(\pi\permact t_1,\ldots,\pi\permact t_n)$.
The associated 
\emph{term monad}~$\monad T_\Sigma = (T_\Sigma,\eta^\Sigma,\mu^\Sigma)$ is
strong, with strength $\wh{\st}^\Sigma$ given as follows:\\[-1mm]
\[
\begin{array}{rcccc}
\wh{\st}^\Sigma_{X,Y}
&:& T_\Sigma (X) \septen Y &\rightarrow& T_\Sigma(X\septen Y)
\\[1mm]
&:&
(t,y)
&\mapsto&
t\subst{x\mapsto (x,y)}_{x\in X}
\end{array}
\]\\
where $t\subst{x\mapsto (x,y)}_{x\in X}$ denotes the term obtained by
simultaneously substituting $(x,y)$ for each $x$ in the term $t$.

\hide{
A \nh{MES} syntax for a NES is given by the strong monad
$\monad T_\Sigma = (T_\Sigma,\eta^\Sigma,\mu^\Sigma,\wh{\st}^\Sigma)$ 
determined by a nominal signature $\Sigma$.
}

\hide{
As equations for NESs,
we only consider equations of the form
$\atmprd n \rightarrow T_\Sigma \coprod_{i=1}^\ell \atmprd{n_i}$
for $n, \ell,n_1,\ldots,n_\ell\in\Nat$.
}
A NES is a MES $(\ul{\Nom},\monad T_\Sigma,\Ax)$ for $\Sigma$ a nominal
signature and $\Ax$ a set of equations with 
arities~$\coprod_{i=1}^\ell \atmprd{n_i}$ for
$\ell,n_1,\ldots,n_\ell\in\Nat$ and coarities $\atmprd n$ for $n\in\Nat$.

\subsection{Nominal Equational Presentations}

We introduce \emph{Nominal Equational Presentations~(NEPs)} as syntactic
counterparts of NESs.

We define a \emph{variable context} as a finite set of variables~$\U{V}$
together with a function~$V:\U{V} \rightarrow \Nat$ assigning a
\emph{valence} to each variable.
We write ${x_1:n_1,\ldots,x_\ell:n_\ell}$ for the variable context
with variables $x_1,\ldots,x_\ell$ respectively of valence
$n_1,\ldots,n_\ell$. 
Variable contexts are syntax for arities, and every such $V$ determines
the arity 
\[
\toTES{V}= \coprod_{x\in\U{V}} \atmprd{V(x)}\;.
\] 
We write $x(\atmlstz{a})$ for the element $\iota_{x}(\atmlstz{a})$ of
$\toTES{V}$; when convenient, we further abbreviate $x()$ as $x$.

For $n\in\Nat$ and a variable context $V$, the following bijection
\begin{equation}\label{NominalKmapBijection}
\begin{array}{l}
\textstyle
\Nom(\atmprd{n}, T_\Sigma\toTES{V})
\\[1mm] 
\textstyle
\quad\ \iso\
\Nom(1,\atmhom n {\, T_\Sigma\toTES{V}})
\\[1mm] 
\textstyle
\quad\ \iso\
\setof{\tau\in \atmhom n {\, T_\Sigma\toTES{V}}
       \;\suchthat\; \suppz{\tau} = \emptyset}
\\[1mm] 
\textstyle
\quad\ =\
\setof{
\atmabs{\atmlstz{a}} t\in 
\atmhom n {\, T_\Sigma\toTES{V}}
\;\suchthat\; 
\suppz{t}\subseteq \setofz{\atmlstz a}}
\end{array}
\end{equation}
shows that a Kleisli map~$\atmprd{n}\rightarrow T_\Sigma\toTES{V}$ is
determined by an \mh{\alpha} equivalence class $\atmabs{\atmlstz{a}} t$
for $\atmlstz a\in\atmprd n$ and 
$t \in (T_\Sigma\toTES{V})_{\atmlstz a} =
\setof{ t\in T_\Sigma\toTES{V} \suchthat \suppz{t}\subseteq
\setofz{\atmlstz a} }$.  
\hide{
Syntactically, the \mh{\alpha} equivalence class $\atmabs{\atmlstz{a}} t$ 
may be described by pairs
\[
(\atmlstz a,t)
\quad\mbox{ for }\;
\atmlstz a \in \atmprd n,\,
t \in T_\Sigma\toTES{V}
\;\mbox{ such that }\;
\suppz{t}\subseteq\setof{\atmlstz a}
\]
where we understand the tuple of distinct atoms $\atmlstz a$ as binding
atoms; and the condition $\suppz{t}\subseteq\setof{\atmlstz a}$
as saying that there are no free atoms in the term $t$.
}
The set $(T_\Sigma\toTES{V})_{\atmlstz a}$ has the following inductive
syntactic description:
\[
\begin{array}{rclcl}
t \in (T_\Sigma\toTES{V})_{\atmlstz a}
& ::= & x(\atmlstz b)
&& (\, x(\atmlstz b) \in \toTES{V}
\text{ such that }\setofz{\atmlstz b}\subseteq\setofz{\atmlstz a} \,)
\\[1mm]
& \mid & \opr(t_1,\ldots,t_n)
&\quad& 
(\,\opr\in\Sigma(n)
\text{ such that }\suppz{\opr}\subseteq\setofz{\atmlstz a} ,
\\
&&&&
\
\text{ and } t_1,\ldots,t_n\in (T_\Sigma\toTES{V})_{\atmlstz a}
\,) \enspace.
\end{array}
\]

Directly motivated by this analysis, we define the notion of \emph{Nominal
Equational Presentation~(NEP)} as follows.  
A \emph{nominal context} $\nomctx{\atmlstz{a}}{V}$ consists of an atom
context~${\atmlstz{a} \in \atmprd{n}}$, for $n\in\Nat$, and a variable
context $V$.
A \emph{nominal term}~$t$, for a nominal signature~$\Sigma$, in a nominal
context~$\nomctx{\atmlstz{a}}{V}$, denoted 
$\nomctx{\atmlstz{a}}{V} \vdash t$, is given by a 
term~$t \in (T_\Sigma\toTES{V})_{\atmlstz a}$; 
a \emph{nominal equation}~$\eqn{\nomctx{\atmlstz a}V}{t}{t'}$ is given by
a pair of nominal terms $t$ and $t'$ in the same nominal 
context~$\nomctx{\atmlstz{a}}{V}$.
A NEP~$\mathbb{T}=(\Sigma,E)$ consists of a nominal signature $\Sigma$
and a set of nominal equations~$E$.

\hide{
We note that the definition of synthetic nominal equational theory 
depends neither on the symmetric monoidal structure $(I,\septen,\sephom{-,=})$
nor on the strength $\wh{\st}^\Sigma$.
As we will see in the next section,
these structures only affect the model theory of 
synthetic nominal equational theories.
}

\begin{example}[continued from Example~\ref{ex:nom-sig-lam}, 
\cf~\citep{GabbayMathijssen07} and\linebreak \citep{CloustonPitts07}]
\label{ex:nom-thy-lam}
The NEP~$\mathbb{T}_\lambda=(\Sigma_\lambda,E_\lambda)$ for
$\alpha\beta\eta$-equivalence of untyped \mbox{$\lambda$-terms} has the
following equations: 
\[
\begin{array}{lrcl}
(\alpha) &
\eqn
{\nomctx{a,b}{\;x:1} &}
{& \mbs{L}_a.\,x(a) \;}
{\; \mbs{L}_b.\,x(b)}
\\[1mm]
(\beta_\kappa) &
\eqn
{\nomctx{a}{\; x:0,y:1 } &}
{& \mbs{A}\big(\mbs{L}_a.\,x\,,\,y(a)\big) \;}
{\; x}
\\[1mm]
(\beta_\mbs{V}) &
\eqn
{\nomctx{a}{\; x:1 } &}
{& \mbs{A}\big(\mbs{L}_a.\,\mbs{V}_a\,,\,x(a)\big) \;}
{\; x(a)}
\\[1mm]
(\beta_\mbs{L}) &
\eqn
{\nomctx{a,b}{\; x:2,y:1}&}
{&\mbs{A}\big(\mbs{L}_a.\,\mbs{L}_b.\,x(a,b)\,,\,y(a)\big)\;}
{\; \mbs{L}_b.\,\mbs{A}\big(\mbs{L}_a.\,x(a,b)\,,\,y(a)\big)}
\\[1mm]
(\beta_\mbs{A}) & 
\terminctx
{\nomctx{a}{\; x:1,y:1,z:1 } &}
{& \mbs{A}\big(\mbs{L}_a.\,\mbs{A}(x(a),y(a))\,,\,z(a)\big) \;}
\\
&&&
\eqnz{\ }
{\; \mbs{A}\big(\mbs{A}\big(\mbs{L}_a.\,x(a),z(a)\big)\,,\,\mbs{A}\big(\mbs{L}_a.\,y(a),z(a)\big)\big)}
\\[1mm]
(\beta_\varepsilon) &
\eqn
{\nomctx{a,b}{\;x:1} &}
{& \mbs{A}\big(\mbs{L}_a.\,x(a),\mbs{V}_{b}\big) \;}
{\; x(b)}
\\[1mm]
(\eta) &
\eqn
{\nomctx{a}{\;x:0} &}
{& \mbs{L}_a.\,\mbs{A}(x,\mbs{V}_a) \;}
{\; x}
\end{array}
\]
where we write $\mbs{L}_a.\,t$ for $\mbs{L}_a(t)$.
\end{example}

By construction, thus, NEPs represent NESs.
\hide{
Each nominal context $\nomctx{\atmlst{a}{n}}{V}$
determines the coarity $\atmprd{n}$ and the arity $\toTES{V}$;
and
Each nominal term~$\nomctx{\atmlst{a}{n}}{V} \vdash t$ determines the map
\[
\toTES{\nomctx{\atmlst{a}{n}}{V} \vdash t} 
:\atmprd{n} \rightarrow T_\Sigma\toTES{V}
\]
corresponding to 
$\atmabs{\atmlst{a}{n}}{t} \in \sephom{\atmprd{n},T_\Sigma\toTES{V}}$
via the bijection~(\ref{NominalKmapBijection}).
Indeed, the equivariant function $\toTES{\nomctx{\atmlst a n}V\vdash t}$
maps $\atmlstz b\in\atmprd{n}$ to 
$\atmtrans{\atmlstz a}{\atmlstz b}\permact t \in T_\Sigma\toTES{V}$.
}
Indeed, a NEP~$\mathbb T = (\Sigma,E)$ determines the NES 
$
\toTES{\mathbb T} = 
(\ul{\Nom},\monad T_\Sigma,\toTES{E})
$
with the set of equations $\toTES{E}$ given by
\[\setof{
\eqnz
{\toTES{\nomctx{\atmlstz a}V\vdash l}}
{\toTES{\nomctx{\atmlstz a}V\vdash r}}
:\atmprd{n} \rightarrow T_\Sigma\toTES{V}
}
_{(\, \eqn{\nomctx{\atmlstz a}V}{l}{r}\,)\,\in\,E}
\] 
where ${\toTES{\nomctx{\atmlstz a}V\vdash t}}$ is the Kleisli map
corresponding to $\atmabs{\atmlstz{a}}{t}$ via the
bijection~(\ref{NominalKmapBijection}).

\hide{
\begin{remark}
\label{rem:eqvar-for-nomthy}
To have a bijection between NESs and NEPs,
one has to take \mh{\alpha} equivalence classes of nominal terms for the
\mh{\alpha} equivalence relation $\approx_\alpha$ generated by the rule
\[
\big(\terminctx{\nomctx{\atmlst{a}{n}}{V}}{t}\big)
\;\approx_\alpha\;
\big(\terminctx{\nomctx{\atmlst{b}{n}}{V}}
{\atmtrans{\atmlst a n}{\atmlst b n}\permact t}\big)
\, .
\]
However, instead of imposing the equivalence on syntactic terms, we take
this into account when we reason about them by introducing the following
rule:
\[
\myproof{
  \AxiomC{$\eqn{\nomctx{\atmlst{a}{n}}{V}}{t}{t'}$}
  \UnaryInfC{$
    \eqn{\nomctx{\atmlst{b}{n}}{V}}
    {\atmtrans{\atmlst a n}{\atmlst b n}\permact t}
    {\atmtrans{\atmlst a n}{\atmlst b n}\permact t'}
    $}
}
\] 
\end{remark}
}

\subsection{Model theory}

The model theory of a NEP~$\mathbb T=(\Sigma,E)$ is derived from that
of the NES~$\toTES{\mathbb T}$.  This we now spell out in elementary
terms. 

\hide{
Recall that an algebra for the NES~$\toTES{\mathbb T}$ is an
Eilenberg-Moore algebra~$(M,s:T_\Sigma M\rightarrow M)$ for the monad
$\monad T_\Sigma$ for which the diagram
\[
\setlength{\arraycolsep}{0pt}
\xymatrix@C=50pt@R=0pt{
\sephom{\toTES{V},M}\septen \atmprd n
\ar@<3pt>[r]^-{
  \id\septen\toTES{\nomctx{\atmlstz{a}}{V} \vdash t_1}
}
\ar@<-3pt>[r]_-{
  \id\septen\toTES{\nomctx{\atmlstz{a}}{V} \vdash t_2}
}
& 
\sephom{\toTES{V},M}\septen T_\Sigma \toTES{V}
}
\!\!\!\!
\xymatrix@C=50pt@R=0pt{
\ar[r]^-{\wh{\st}^\Sigma_{\sephom{\toTES{V},M},\toTES{V}}}
&
T_\Sigma \big(\sephom{\toTES{V},M}\septen \toTES{V}\big) 
}
\!\!\!\!
\xymatrix@C=15pt@R=0pt{
\ar[rr]^-{T_\Sigma(\epsilon^{\toTES{V}}_M)}
&&
T_\Sigma M 
\ar[r]^-{s} 
& 
M\enspace.
}
\]
commutes, for all nominal 
equations~${\eqn{\nomctx{\atmlst a n}V}{t_1}{t_2}}$ in $\Ax$.

It then follows from the isomorphisms
\[
\begin{array}{c}
\catAlg{\Nom}{\monad T_\Sigma} \,\iso\, \catalg{F_\Sigma}
\enspace , \quad
\sephom{\toTES{V},M}
\,=\,
\sephombig{\coprod_{x\in\U{V}} \atmprd{V(x)},\, M}
\,\iso\,
\prod_{x\in\U{V}}\,\atmhom{V(x)}{M}
\end{array}
\]
that \mh{\toTES{\mathbb T}} algebras bijectively correspond to
\mh{F_\Sigma} algebras~$(M,\ev:F_\Sigma M\rightarrow M)$ for which
the diagram
\begin{equation}
\label{eqn:snel-model}
\begin{minipage}{.9\textwidth}
$
\hspace*{-15pt}
\setlength{\arraycolsep}{0pt}
\begin{array}{l}
\xymatrix@C=70pt@R=0pt{
\big(\prod_{x\in\U{V}}\,\atmhom{V(x)}{M}\big) \septen \atmprd n
\ar@<3pt>[r]^-{
  \id\septen\toTES{\nomctx{\atmlstz{a}}{V} \vdash t_1}
}
\ar@<-3pt>[r]_-{
  \id\septen\toTES{\nomctx{\atmlstz{a}}{V} \vdash t_2}
}
& 
\big(\prod_{x\in\U{V}}\,\atmhom{V(x)}{M}\big)\septen T_\Sigma \toTES{V}
}
\\
\mbox{}\hspace*{1pc}
\xymatrix@C=20pt@R=0pt{
\ar[r]^-{\wh{\st}^\Sigma}
&
T_\Sigma \Big(\big(\prod_{x\in\U{V}}\,\atmhom{V(x)}{M}\big)\septen \toTES{V}\Big) 
\iso
T_\Sigma \big(\sephom{\toTES{V},M}\septen \toTES{V}\big) 
}
\!\!\!\!
\xymatrix@C=40pt@R=0pt{
\ar[r]^-{T_\Sigma(\epsilon^{\toTES{V}}_M)}
&
T_\Sigma M 
}
\!\!\!\!
\xymatrix@C=15pt@R=0pt{
\ar[r]^-{\wh{\ev}} 
& 
M
}
\end{array}
$
\end{minipage}
\end{equation}
commutes, for all nominal 
equations~$\eqn{\nomctx{\atmlst a n}V}{t_1}{t_2}$ in $\Ax$, where
$(M,\wh{\ev}:T_\Sigma M\rightarrow M)$ is the Eilenberg-Moore algebra for
$\monad T_\Sigma$ corresponding to the \mh{F_\Sigma} algebra~$(M,\ev)$.
}

A \emph{$\mathbb T$-algebra} is an
\mbox{$F_\Sigma$-algebra}~$(M,\ev:F_\Sigma M\rightarrow M)$ such that for
all nominal equations~$(\nomctx{\atmlstz a}V\vdash l\equiv r)\in E$, 
\[
\llrrbrk{\, \nomctx{\atmlstz{a}}{V} \vdash l \,}_{(M,\ev)}
=
\llrrbrk{\, \nomctx{\atmlstz{a}}{V} \vdash r \,}_{(M,\ev)}
\ : 
\llrrbrk{ \nomctx{\atmlstz a}V }(M) \rightarrow M
\]
where 
%
\[
\textstyle
\llrrbrk{\nomctx{\atmlstz a}V}(M)
\;=\;
  \big( 
  \prod_{x\in\U{V}}\atmhom{V(x)}{M} 
  \big)
  \septen
  \atmprd{n}
\]
and 
where 
$\llrrbrk{\, \nomctx{\atmlstz{a}}{V} \vdash t \,}_{(M,\ev)}
$
is inductively defined as follows:
\begin{enumerate}[$\bullet$]
\item
$\llrrbrk{\,\nomctx {\atmlstz a} V\vdash {x(\atmlstz{b})}\,}_{(M,\ev)}
\big(\setof{\atmabs{\atmlstz{c}_x}m_x}_{x\in\U{V}},\,\atmlstz d\big)
=
\atmtrans{\atmlstz{c}_x}{\atmlstz{c}}\permact m_x$
for $\atmlstz{c} = \atmtrans{\atmlstz a}{\atmlstz d}\permact
\atmlstz{b}$,
\\[-1mm]

\item
$\llrrbrk{\,\nomctx {\atmlstz a} V \vdash {\opr(t_1,\ldots,t_k)}\,}_{(M,\ev)}
\big(\setof{\atmabs{\atmlstz{c}_x}m_x}_{x\in\U{V}},\,\atmlstz d\big)
=
\ev_k(\opr',t'_1,\ldots,t'_k)
$
for ${\ev_k:\Sigma(k)\times M^{k}\rightarrow M}$ the $k$-component of the
structure map $\ev$, and 
\[
\opr' 
= 
\atmtrans{\atmlstz a}{\atmlstz d}\permact\opr
\enspace ,
\quad
t'_i 
= 
\llrrbrk{\,\nomctx{\atmlstz a} V \vdash {t_i}\,}_{(M,\ev)}
\big(\setof{\atmabs{\atmlstz{c}_x}m_x}_{x\in\U{V}},\,\atmlstz d\big)
\enspace.
\]
\end{enumerate}
%
The category $\catalg{\mathbb{T}}$ is the full subcategory of
$\catalg{F_\Sigma}$ consisting of $\mathbb{T}$-algebras.
$\catalg{\mathbb{T}}$ and $\catalg{\toTES{\mathbb T}}$ are isomorphic by
construction. 

\begin{example}[continued from Example~\ref{ex:nom-thy-lam}]
\label{ex:nom-thy-lam-model}
A \mh{\mathbb{T}_\lambda} algebra has a carrier $M\in\Nom$ with structure
maps 
\[
\llrrbrk{\mbs{V}} : \NomAtom \rightarrow M \enspace,\quad
\llrrbrk{\mbs{L}} : \NomAtom\times M\rightarrow M \enspace ,\quad
\llrrbrk{\mbs{A}} : M^2 \rightarrow M
\]
satisfying the equations of the theory.  For instance, according to the
equation~$(\alpha)$, we have that
\[
\llrrbrk{\mbs{L}}\big(a,\atmtrans c a\permact m\big)
= 
\llrrbrk{\mbs{L}}\big(b,\atmtrans c b\permact m\big)
\quad \text{ for all }
(\atmabs c m,(a,b))\in \sephom{\NomAtom,M}\septen \atmprd{2}
\] 
and, according to the equation~$(\eta)$, we have that
\[
\llrrbrk{\mbs{L}}
  \big(a,\llrrbrk{\mbs{A}}(m,\llrrbrk{\mbs{V}}(a))\big)
=
m
\quad \text{ for all } (m,a)\in M\septen\NomAtom
\enspace.
\]

\hide{
By examining the construction~(\ref{eqn:qt-con-simp}) of
the free \mh{\mathbb{T}_\lambda} algebra over 
the initial \mh{F_{\Sigma_\lambda}} algebra
$T_{\Sigma_\lambda}(0)$ with the syntactic structure map,
one can see that 
the initial \mh{\mathbb{T}_\lambda} algebra has
as carrier the nominal set of 
\mh{\alpha\beta\eta} equivalence classes of
\mh{\lambda} terms with the appropriate \mh\Perm action.
}
\end{example}

\subsection{Synthetic Nominal Equational Logic}

For a NEP~${\mathbb{T}= (\Sigma,E)}$, we consider the EML associated to the
NES~$\toTES{\mathbb T}$ in syntactic form, and thereby synthesise a deductive
system for deriving valid nominal equations in \mbox{$\mathbb T$-algebras}.
The resulting \emph{Synthetic Nominal Equational Logic~(SNEL)} has the
inference rules given in Figure~\ref{fig:SNELRules}.  
\begin{figure}[t]
\[\begin{array}{c}
\myproof{
  \AxiomC{$\Eeqn{\nomctx{\atmlst{a}{n}}{V}}{t}{t'}$}
  \LeftLabel{$\s{Eqvar}$}
  \UnaryInfC{$
    \Eeqn{\nomctx{\atmlst{b}{n}}{V}}
    {\atmtrans{\atmlst a n}{\atmlst b n}\permact t}
    {\atmtrans{\atmlst a n}{\atmlst b n}\permact t'}
    $}
}
\\[9mm]
\myproof{
  \AxiomC{$\nomctx{\atmlsth a n}V\vdash_E t$}
  \LeftLabel{$\s{Ref}$}
  \UnaryInfC{$\Eeqn{\nomctx{\atmlsth a n}V}{t}{t}$}
}
\qquad  
\myproof{
\AxiomC{$\Eeqn{\nomctx{\atmlsth a n}V}{t}{t'}$}
\LeftLabel{$\s{Sym}$}
\UnaryInfC{$\Eeqn{\nomctx{\atmlsth a n}V}{t'}{t}$}
}
\qquad 
\myproof{
\AxiomC{${
\Eeqn{\nomctx{\atmlsth a n}V}{t}{t'}
\quad 
\Eeqn{\nomctx{\atmlsth a n}V}{t'}{t''}
}$}
\quad
\LeftLabel{$\s{Trans}$}
\UnaryInfC{$\Eeqn{\nomctx{\atmlsth a n}V}{t}{t''}$}
}
\\[9mm]
\myproof{
\AxiomC{$\big(\eqn{\nomctx{\atmlsth a n} {V}}{l}{r}\big)\in E$}
\LeftLabel{$\s{Axiom}$}
\UnaryInfC{$\Eeqn{\nomctx{\atmlsth a n}V}{l}{r}$}
}
\qquad\qquad
\myproof{
  \AxiomC{$\Eeqn{\nomctx{\atmlstz a,\atmlstz b} V}{t}{t'}$}
  \RightLabel{\big($\atmlstz b  \freshfor \atmlstz{a}, t,t'$\big)}
  \LeftLabel{$\s{Elim}$}
  \UnaryInfC{$\Eeqn{\nomctx {\atmlstz a} V}{t}{t'} $}
}
\\[9mm]
\begin{array}{c}
\myproof{
  \AxiomC{$\Eeqn{\nomctx {\atmlstz a} V}{t}{t'}$}
  \LeftLabel{$\s{Intro}$}
  \UnaryInfC{${\Eeqn
               {\nomctx{\atmlstz a,\atmlst b m}{\Vext V m}}
               {\; t\subst{x(\atmlstz{b}_x) 
	           \mapsto 
	           x(\atmlstz{b}_x,\atmlstz b)}_{x\in\U{V}} \;}
               {\; t'\subst{x(\atmlstz{b}_x) 
	           \mapsto 
	           x(\atmlstz{b}_x,\atmlstz b)}_{x\in\U{V}} }
             }$}
}
\\[7mm]
\text{with }
\atmlstz b \freshfor \atmlstz a
\text{ and }
\forall_{x\in\U{V}}\; 
\atmlstz b\freshfor \atmlstz b_x\ 
\\[1.5mm]
\text{where }
\U{\Vext V m} \,=\, \U{V}
\text{ and }
\forall_{x\in\U{V}}\; \Vext V m (x)
\,=\, V(x) + m
\end{array}
\\[16mm]
\myproof{
\AxiomC{$\Eeqn{\nomctx {\atmlstz a} V}{t}{t'} 
         \qquad
         \Eeqn{\nomctx {{\atmlstz{b}_x}^{V(x)}}\, U }{s_x}{s'_x}
	   \enspace(x\in \U{V})$}
\LeftLabel{$\s{Subst}_\amalg$}
\UnaryInfC{$
\Eeqn
{\nomctx{\atmlstz a}\, U \;}
{\;
  t
 \subst {x(\atmlstz{b}_x) \mapsto s_x}_{x\in\U{V}}
 \;}
{\;
 t'
 \subst {x(\atmlstz{b}_x) \mapsto s'_x}_{x\in\U{V}}
}
$}
}
\end{array}\]
\caption{Rules of SNEL.
}
\label{fig:SNELRules}
\hrulefill
\end{figure}
The \emph{substitution operation} in the rules~$\s{Intro}$ and
$\s{Subst}_\amalg$ maps\\[-2mm]
\[
t \in T_\Sigma \toTES{U}
\enspace,\quad 
\setof{\,
  \atmabs{\atmlstz{c}_x} s_x  \in \atmhom{U(x)}{T_\Sigma X}
\,}_{x\in\U{U}}
\]\\[-3mm]
to the nominal term\\[-2.5mm]
\[
t\subst{x({\atmlstz c}_x) \mapsto s_x}_{x\in\U{U}}
\,\in\,
T_\Sigma X
\]\\[-2mm]
defined by structural induction on $t$ as follows:\\[-2mm]
\[
\begin{array}{rcl}
x(\atmlstz a)
\subst{x({\atmlstz c}_x) \mapsto s_x}_{x\in\U{U}}
& = &
(\atmlstz c_x \ \atmlstz{a})\permact s_x
\\[2mm]
\opr(t_1,\ldots,t_k) 
\subst{x({\atmlstz c}_x) \mapsto s_x}_{x\in\U{U}}
& = &
\opr\big(
t_1
\subst{x({\atmlstz c}_x) \mapsto s_x}_{x\in\U{U}}
,
\ldots, 
t_k
\subst{x({\atmlstz c}_x) \mapsto s_x}_{x\in\U{U}}
\big)
\enspace.
\end{array}
\]\\[-5mm]

\begin{remark}\label{SNELDerivableRules}
Note that under the rule~$\s{Ref}$, the rules~$\s{Intro}$ and
$\s{Subst}_\amalg$ are inter-derivable with the rule\\[.5mm]
\[\myproof{
\LeftLabel{$\s{IntroSubst}_\amalg$}
\AxiomC
{$\Eeqn{\nomctx{\atmlstz a}V}{t}{t'}
   \qquad
   \Eeqn{\nomctx{{\atmlstz b_x}^{V(x)},\atmlstz b}\,U}{s_x}{s'_x}
   \enspace(x\in\U V)$}
\RightLabel{$(\atmlstz b\freshfor\atmlstz a)$}
\UnaryInfC
  {$\Eeqn
      {\nomctx{\atmlstz a,\atmlstz b}\,U}
      {t\subst{x(\atmlstz b_x)\mapsto s_x}_{x\in\U V}}
      {t'\subst{x(\atmlstz b_x)\mapsto s'_x}_{x\in\U V}}
      $}
}\]\\[1mm]
Indeed, the above arises from the rule~$\s{Intro}$ for $\atmlstz b$ on the
judgement $\Eeqn{\nomctx{\atmlstz a}V}{t}{t'}$ followed
by the rule~$\s{Subst}_\amalg$ with respect to the family
$\Eeqn{\nomctx{{\atmlstz b_x}^{V(x)},\atmlstz b}\,U}{s_x}{s'_x}$ ($x\in\U V$);
whilst, conversely, the rule~$\s{Subst}_\amalg$ is the special case of the
rule~$\s{IntroSubst}_\amalg$ for $\atmlstz b$ the empty tuple and the
rule~$\s{Intro}$ arises by instantiating the rule~$\s{IntroSubst}_\amalg$ with
the family 
$\Eeqn
   {\nomctx{{\atmlstz b_x}^{V(x)},\atmlst b m}{\Vext V m}}
   {x(\atmlstz b_x,\atmlstz b)}
   {x(\atmlstz b_x,\atmlstz b)}$ 
($x\in\U V$). 

We also note that the rule~$\s{Elim}$ is in fact reversible, as the
instantiation of the rule $\s{IntroSubst}_\amalg$ with the family 
$\Eeqn
   {\nomctx{{\atmlstz b_x}^{V(x)},\atmlstz b}V}
   {x(\atmlstz b_x)}
   {x(\atmlstz b_x)}$ 
($x\in\U V$)
yields the derivability of the rule\\[-1mm]
\[\myproof{
\LeftLabel{$\s{Inc}$}
\AxiomC{$\Eeqn{\nomctx{\atmlstz a} V} t {t'}$}
\RightLabel{$(\atmlstz b \freshfor \atmlstz a)$}
\UnaryInfC{$\Eeqn{\nomctx{\atmlstz a,\atmlstz b} V} t {t'}$}
}\]\\[1mm]
\end{remark}\mbox{}\\[-10mm]

\begin{example}[continued from Example~\ref{ex:nom-thy-lam-model}]
\label{ex:proof-by-SNEL}
We give a derivation of 
\[
\eqn
{\nomctx{a}{\;x:1,y:0} }
{\mbs{A}(\mbs{L}_a.\,\mbs{L}_a.\,x(a),y) \,}
{\, \mbs{L}_a.\,x(a)}
\]
in the SNEL of $\mathbb T_\lambda$: 
\\[4mm]
\mbox{}\hfill$
\myproof{
\AxiomC{$
  \setlength{\arraycolsep}{2pt}
  \begin{array}{rrcll}
   &
  \eqn
  {\nomctx{a,b}{\;x:1} &}
  {& \mbs{L}_a.\,x(a) \,}
  {\, \mbs{L}_b.\,x(b)}
  &
  \ \text{by }\s{Axiom}~(\alpha)
  \\
  x\mapsto&
  \eqn
  {\nomctx{c}{\;x:1,y:0} &}
  {& x(c) \,}
  {\, x(c)}
  &
  \ \text{by }\s{Ref}
  \end{array}
$}
\LeftLabel{$\mbf{A}:\ $}
\RightLabel{by $\s{Subst}_\amalg$}
\UnaryInfC{$
  \eqn
  {\nomctx{a,b}{\;x:1,y:0} }
  { \mbs{L}_a.\,x(a) \,}
  {\, \mbs{L}_b.\,x(b)}
  $}
}
$\hfill\mbox{}
\\[4mm]
\mbox{}\hfill$
\myproof{
\AxiomC{$
  \setlength{\arraycolsep}{2pt}
  \begin{array}{rrcll}
   &
  \eqn
  {\nomctx{a,b}{\;z:2},w:0 &}
  {& \mbs{A}(\mbs{L}_a.z(a,b),w) \,}
  {\, \mbs{A}(\mbs{L}_a.z(a,b),w)}
  &
  \ \text{by }\s{Ref}
  \\
  z\mapsto&
  \eqn
  {\nomctx{a,b}{\;x:1,y:0} &}
  {& \mbs{L}_a.\,x(a) \,}
  {\, \mbs{L}_b.\,x(b)}
  &
  \ \text{by }\mbf{A}
  \\
  w\mapsto&
  \eqn
  {\nomctx{\;}{\;x:1,y:0} &}
  {& y \,}
  {\, y}
  &
  \ \text{by }\s{Ref}
  \end{array}
$}
\LeftLabel{$\mbf{B}:\ $}
\RightLabel{
by $\s{Subst}_\amalg$
}
\UnaryInfC{$
  \eqn
  {\nomctx{a,b}{\;x:1,y:0} }
  {\mbs{A}(\mbs{L}_a.\,\mbs{L}_a.\,x(a),y) \,}
  {\, \mbs{A}(\mbs{L}_a.\,\mbs{L}_b.\,x(b),y)}
  $}
}
$\hfill\mbox{}
\\[4mm]
\mbox{}\hfill$
\myproof{
  \AxiomC{$
    \eqn
    {\nomctx{a}{\; x:0,y:1 } }
    { \mbs{A}\big(\mbs{L}_a.\,x\,,\,y(a)\big) \,}
    {\, x}
    \quad \text{by }\s{Axiom}~(\beta_\kappa)
    $}
  \LeftLabel{$\mbf{C}:$}
  \RightLabel{by $\s{Intro}$}
  \UnaryInfC{$
    \eqn
    {\nomctx{a,b}{\; x:1,y:2 } }
    {\mbs{A}\big(\mbs{L}_a.\,x(b)\,,\,y(a,b)\big) \,}
    {\, x(b)}
    $}
}
$\hfill\mbox{}
\\[4mm]
\mbox{}\hfill$
\myproof{
\AxiomC{$
  \setlength{\arraycolsep}{2pt}
  \begin{array}{rrcll}
   &
    \eqn
    {\nomctx{a,b}{\; x:1,y:2 } &}
    {& \mbs{A}\big(\mbs{L}_a.\,x(b)\,,\,y(a,b)\big) \,}
    {\, x(b)}
  &
  \ \text{by }\mbf{C}
  \\
  x\mapsto&
  \eqn
  {\nomctx{b}{\;x:1,y:0} &}
  {& \mbs{L}_b.\,x(b) \,}
  {\, \mbs{L}_b.\,x(b)}
  &
  \ \text{by }\s{Ref}
  \\
  y\mapsto&
  \eqn
  {\nomctx{a,b}{\;x:1,y:0} &}
  {& y \,}
  {\, y}
  &
  \ \text{by }\s{Ref}
  \end{array}
$}
\LeftLabel{$\mbf{D}:\ $}
\RightLabel{
by $\s{Subst}_\amalg$
}
\UnaryInfC{$
  \eqn
  {\nomctx{a,b}{\;x:1,y:0} }
  {\mbs{A}(\mbs{L}_a.\,\mbs{L}_b.\,x(b),y) \,}
  {\, \mbs{L}_b.\,x(b)}
  $}
}
$\hfill\mbox{}
\\[4mm]
\mbox{}\hfill$
\myproof{
\AxiomC{$
  \!
  \eqn
  {\nomctx{a,b}{\;x:1,y:0} }
  {\, \mbs{A}(\mbs{L}_a.\,\mbs{L}_a.\,x(a),y)}
  {\, \mbs{L}_a.\,x(a)}
  \ \ 
  \text{by }\s{Trans}(\s{Trans}(\mbf{B},\mbf{D}),\s{Sym}(\mbf{A}))
  \!
$}
\RightLabel{by $\s{Elim}$}
\UnaryInfC{$
  \eqn
  {\nomctx{a}{\;x:1,y:0} }
  {\, \mbs{A}(\mbs{L}_a.\,\mbs{L}_a.\,x(a),y)}
  {\, \mbs{L}_a.\,x(a)}
  $}
}
$\hfill\mbox{}
\end{example}

\subsection{Soundness}

By construction, if a nominal equation 
$\Eeqn {\nomctx{\atmlsth a n}V} t {t'}$ is derivable in SNEL,
then the equation $\eqn{\toTES{E}}
{\toTES{\terminctx{\nomctx{\atmlsth a n}V}{t}}}
{\toTES{\terminctx{\nomctx{\atmlsth a n}V}{t'}}}$ 
is derivable in EML.
We explain why this is so for each rule.
\begin{enumerate}[$\bullet$]
\item
  The SNEL rule~$\s{Eqvar}$ arises from the fact that 
  \[\ul{\nomctx{\atmlst a n}V\vdash s}
   =
   \ul{\nomctx{\atmlst b n}V\vdash
     \atmtrans{\atmlst a n}{\atmlst b n}\permact s}
     : {\Atom^{\# n}\rightarrow T_\Sigma(\ul V)}\] 
  for all $\atmlst b n \in\atmprd{n}$ and nominal terms~$\nomctx{\atmlst a
  n}V\vdash s$.\footnote{The omission of this rule in the SNEL presented
  in~\citet{FioreHur08} is an oversight.}
  
\medskip
\item
The SNEL~rules $\s{Ref}$, $\s{Sym}$, $\s{Trans}$, and $\s{Axiom}$ directly
mimic the corresponding EML rules.

\medskip
\item
The SNEL rule~$\s{Elim}$ arises from the EML rule~$\s{Local}_1$ with respect
to the epimorphic projection map 
$\atmprd{(n+m)}
 \xymatrix@C=15pt{\ar@{->>}[r]&}
 \atmprd{n}$
sending $(\atmlst a n,\atmlst b m)$ to $(\atmlst a n)$.

\medskip
\item
The SNEL rule~$\s{Intro}$ arises from the EML rule~$\s{Ext}$
extended with the nominal set $\atmprd{m}$.

\noindent
Note that for $\nomctx{\atmlstz a}V\vdash s$, one has that
$
\toTES{
\terminctx
{\nomctx{\atmlst a n,\atmlst b m}{\Vext V m}}
{
s
\subst{x(\atmlstz{c}_x) \mapsto x(\atmlstz{c}_x,\atmlstz b)}_{x\in\U{V}}
}
}
$
amounts to the composite\\[2mm]
$
\xymatrix@C=6pc{
\atmprd{(n+m)}
\iso
\atmprd{m}\septen\atmprd{n}
\ar[r]^-{
\langle \atmprd{m} \rangle \,
\toTES{\nomctx{\atmlstz a}{V}\vdash s}
}
&
T_\Sigma\big(\atmprd{m}\septen\toTES{V} \big)
\iso
T_\Sigma\big(\coprod_{x\in\U V}\atmprd{(V(x)+m)}\big)
}
.
$

\medskip
\item
The SNEL rule~$\s{Subst}_\amalg$ arises from the EML rule\\
\[\quad\qquad
\begin{array}{c}
\myproof{
  \AxiomC{$
    \eqnz
    {\toTES{\terminctx{\nomctx{\atmlstz a}{U}}{t}}}
    {\toTES{\terminctx{\nomctx{\atmlstz a}{U}}{t'}}}
  \qquad
    \eqnz
    {\toTES{\terminctx{\nomctx{\atmlstz{b}_x}{V}}{s_x}}\,}
    {\,\toTES{\terminctx{\nomctx{\atmlstz{b}_x}{V}}{s'_x}}}
    \enspace(x\in\U{U})
    $}
  \LeftLabel{$\s{Subst}_\amalg$}  
  \UnaryInfC{$
    \eqnz
    {\big(\toTES{\terminctx{\nomctx{\atmlstz a}{U}}{t}}\big)
    \Big\{
    \big[
    \toTES{\terminctx{\nomctx{\atmlstz{b}_x}{V}}{s_x}}
    \big]_{x\in\U{U}}
    \Big\}
    \,}
    {\,\big(\toTES{\terminctx{\nomctx{\atmlstz a}{U}}{t'}}\big)
    \Big\{
    \big[
    \toTES{\terminctx{\nomctx{\atmlstz{b}_x}{V}}{s'_x}}
    \big]_{x\in\U{U}}
    \Big\}
    }
    $}
}
\end{array}\]\\[2mm]
noting that 
$
\big(\toTES{\terminctx{\nomctx{\atmlstz a}{U}}{t}}\big)
\Big\{
\big[
\toTES{\terminctx{\nomctx{\atmlstz{b}_x}{V}}{s_x}}
\big]_{x\in\U{U}}
\Big\}
\;=\;
\toTES{
\terminctx
{\nomctx{\atmlstz a} V }
{
  t
 \subst {x(\atmlstz{b}_x) \mapsto s_x}_{x\in\U{U}}
}
}
$.
\end{enumerate}
\bigskip
Thus, the soundness of SNEL follows from that of EML.

\hide{
\begin{remark}
Since the category of sets embeds in that of nominal sets, every algebraic
theory is a NEP and for them SNEL restricted to contexts with empty atom
context and variables of valence zero reduces to the equational logic SEL
for algebraic theories.  
\end{remark}
}

\subsection{Completeness}
\label{NominalCompleteness}

We provide a sound and complete rewriting-style deduction system for NEPs,
referred to as \emph{Synthetic Nominal Rewriting~(SNR)}, and establish the
completeness of SNEL.

For every NEP~$\mathbb{T}$, the associated NES~$\toTES{\mathbb T}$ is
inductive.  Indeed, using that in $\Nom$ finite limits commute with
filtered colimits and equivariant functions are epimorphic iff their
underlying function is surjective, one sees that the
endofunctor~$F_\Sigma$ associated to a nominal signature~$\Sigma$ 
preserves colimits of \mh{\omega}chains and epimorphisms.
Moreover, since for every~$n\in\Nat$, the right 
adjoint~$\inthom{\atmprd{n},-}$ is also a left adjoint, it follows that,
for every variable context~$V$, the nominal
set~$\toTES{V}$ is 
compact 
and projective.

For a NEP~${\mathbb{T}=(\Sigma,E)}$, we consider the
construction~(\ref{eqn:qt-con-simp}) for the associated
NES~$\toTES{\mathbb T}$.  Since the forgetful
functor~${\U{-}:\Nom\rightarrow\Set}$ creates colimits, we have the
following explicit description. 
For a nominal set~$X$, the nominal set~$(T_\Sigma X)_1$ has as underlying
set the quotient~$\U{T_\Sigma X}/_{\approx_1}$ under the equivalence
relation~$\approx_1$ on $\U{T_\Sigma X}$ generated by the following
rule:\\
\[
\myproof{%
\AxiomC{$\big(\eqn{\nomctx {\atmlst a n} {V}}{l}{r}\big)\in E$}
\UnaryInfC{$\big(\atmtrans{\atmlst a n}{\atmlst b n}\permact l\big)  
            \subst{x(\atmlstz c_x) \mapsto s_x}_{x\in \U V}
            \approx_1
            \big(\atmtrans{\atmlst a n}{\atmlst b n}\permact r\big)  
            \subst{x(\atmlstz c_x) \mapsto s_x}_{x\in \U V}$}
}
\]\\[1mm]
for $\atmlsth b n \in\atmprd{n}$ and 
$\atmabs{\atmlstz{c}_x}s_x\in \atmhom{V(x)}{T_\Sigma X}$
such that $\atmlsth b n \freshfor \atmabs{\atmlstz{c}_x}s_x$
for all $x\in\U{V}$.
The action of $(T_\Sigma X)_1$ is given by 
${\pi\permact [t]_{\approx_1} = [\pi\permact t]_{\approx_1}}$.
The equivariant function $q_0:T_\Sigma X\epirightarrow (T_\Sigma X)_1$
maps a nominal term~${t}$ to its equivalence class~$[t]_{\approx_1}$.

The nominal set $(T_\Sigma X)_n$, for $n \ge 2$, has as underlying set the
quotient $\U{T_\Sigma X}/_{\approx_n}$ under the equivalence
relation~$\approx_n$ on $\U{T_\Sigma X}$ generated by the following
rules:\\
\[
\myproof{%
\AxiomC{$s \approx_{n-1} s'$}
\UnaryInfC{$s \approx_{n} s'$}
}
\qquad\qquad
\myproof{%
\AxiomC{$
    s_1 \approx_{n-1} s'_1,\ \ldots,\  
    s_{k} \approx_{n-1} s'_{k}
$}
\RightLabel{\big($\opr\in\Sigma(k)$\big)}
\UnaryInfC{$
  \opr(s_1,\ldots,s_k) 
  \approx_{n}
  \opr(s'_1,\ldots,s'_k)
$}
}
\]\\[1mm]
The action of $(T_\Sigma X)_n$ is given by 
${\pi\permact [t]_{\approx_n} = [\pi\permact t]_{\approx_n}}$.
The equivariant function
$q_{n-1}:(T_\Sigma X)_{n-1} \epirightarrow (T_\Sigma X)_n$
maps ${[t]_{\approx_{n-1}}}$ to $[t]_{\approx_n}$.

The nominal set $T_{\toTES{\mathbb T}} X$, being the colimit of 
the 
chain~${\setof{q_n:(T_\Sigma X)_n\epirightarrow(T_\Sigma X)_{n+1}}_{n\ge 0}}$, 
is given by $\U{T_{\toTES{\mathbb T}} X} = \U{T_\Sigma X}/_{\approx_E}$
with action $\pi\permact [t]_{\approx_E} = [\pi\permact t]_{\approx_E}$,
for $\approx_E$ the equivalence relation on $\U{T_\Sigma X}$ specified 
by the rules of Figure~\ref{fig:rw-logic-for-nomthy}.  The quotient 
map~${\qt^{\toTES{\mathbb{T}}}_X:T_\Sigma X\rightarrow T_{\toTES{\mathbb T}}}X$
sends a nominal term~$t$ to its equivalence class~$[t]_{\approx_E}$.
\begin{figure}[t]
\noindent\mbox{}
$\displaystyle
\begin{array}{c}
\myproof{
  \AxiomC{$t \in \U{T_\Sigma X}$}
  \LeftLabel{$\s{Ref}$}
  \UnaryInfC{$t \approx_{E} t$}
}
\qquad
\myproof{
  \AxiomC{$t \approx_{E} t'$}
  \LeftLabel{$\s{Sym}$}
  \UnaryInfC{$t' \approx_{E} t$}
}
\qquad
\myproof{
  \AxiomC{$t \approx_{E} t' \quad t' \approx_{E} t''$}
  \LeftLabel{$\s{Trans}$}
  \UnaryInfC{$t \approx_{E} t''$}
}
\\[9mm]
\begin{array}{c}
\myproof{%
\LeftLabel{$\s{Inst}$}
\AxiomC{$\big(\eqn{\nomctx {\atmlst a n} {V}}{l}{r}\big)\in E$}
\UnaryInfC{$
  \big(\atmtrans{\atmlst a n}{\atmlst b n}\permact l\big)  
  \subst{x(\atmlstz c_x) \mapsto s_x}_{x\in \U V}
  \approx_E
  \big(\atmtrans{\atmlst a n}{\atmlst b n}\permact r\big)  
  \subst{x(\atmlstz c_x) \mapsto s_x}_{x\in \U V}$}
}
\\[7mm]
\begin{array}{c}
\text{with }
\atmlsth b n \in\atmprd{n}
\text{ and }
\atmabs{\atmlstz{c}_x}s_x\in \atmhom{V(x)}{T_\Sigma X}
\text{ such that }
\atmlsth b n \freshfor \atmabs{\atmlstz{c}_x}s_x
\text{ for all }
x\in\U{V}
\end{array}
\end{array}
\\[13.5mm]
\myproof{%
\AxiomC{$
    s_1 \approx_E s'_1,\ \ldots,\  
    s_{k} \approx_E s'_{k}
$}
\LeftLabel{$\s{Cong}$}
\RightLabel{\big($\opr\in\Sigma(k)$\big)}
\UnaryInfC{$
  \opr(s_1,\ldots,s_k) 
  \approx_E
  \opr(s'_1,\ldots,s'_k)
$}
}
\end{array}
$\hfill\\
\caption{Rules for $\approx_E$.}
\label{fig:rw-logic-for-nomthy}
\hrulefill
\end{figure}

The rules~$\s{Inst}$ and $\s{Cong}$ of Figure~\ref{fig:rw-logic-for-nomthy}
can be merged into a single one to yield a rewriting-style deduction system.
Indeed, by an induction on the depth of proof trees, one shows that the
relation~$\approx_E$ coincides with the equivalence 
relation~${\approx^{\s R}_E} \subseteq (T_\Sigma X)^2$ generated by the
rule\\[-1mm]
\begin{equation}
\label{eqn:rw-rule-for-nomthy}
\mbf{C}\big[
\big(\atmtrans{\atmlst a n}{\atmlst b n}\permact l\big)  
\subst{x(\atmlstz c_x) \mapsto s_x}_{x\in \U V}
\big]
\;\approx^\s{R}_E\;
\mbf{C}\big[
\big(\atmtrans{\atmlst a n}{\atmlst b n}\permact r\big)  
\subst{x(\atmlstz c_x) \mapsto s_x}_{x\in \U V}
\big]
\end{equation}\\[-1mm]
for $\big(\eqn{\nomctx {\atmlst a n} {V}}{l}{r}\big)\in E$ and
$\mbf{C}[-]$ a context with one hole and possibly with elements from $X$,
and for $\atmlsth b n \in\atmprd{n}$ and 
$\atmabs{\atmlstz{c}_x}s_x\in \atmhom{V(x)}{T_\Sigma X}$ such that
$\atmlst b n \freshfor \atmabs{\atmlstz{c}_x}s_x$ for all $x\in\U{V}$.
Henceforth, the rewriting of nominal terms by the
rule~(\ref{eqn:rw-rule-for-nomthy}) is referred to as \emph{Synthetic
Nominal Rewriting~(SNR)}.

\begin{example}[\cf~the derivation given in
  Example~\ref{ex:proof-by-SNEL}] 
We give a derivation of
\[
\mbs{A}(\mbs{L}_a.\,\mbs{L}_a.\,x(a),y)
\approx^\s{R}
\mbs{L}_a.\,x(a)
\text{ in }
T_{\Sigma_\lambda}(\toTES{x:1,y:0})
\]
%
%
in the SNR of $\mathbb T_\lambda$:\\
\[
\begin{array}[b]{cll}
&
\hspace{-10mm}
\mbs{A}(\mbs{L}_a.\,\mbs{L}_a.\,x(a),y)
\\
&
&\text{by }(\alpha):
\begin{array}{crrl}
&\mbs{A}(\mbs{L}_a.\big[& \mbs{L}_a.\,x(a)\subst{x(c)\mapsto x(c)}&\big],y)
\\
\approx^\s{R}
\!\!\!&&&
\\
&\mbs{A}(\mbs{L}_a.\big[& \mbs{L}_b.\,x(b)\subst{x(c)\mapsto x(c)}&\big],y)
\end{array}
\\
\approx^\s{R}
&
\mbs{A}(\mbs{L}_a.\,\mbs{L}_b.\,x(b),y)
\\
&
&\text{by }(\beta_\kappa):
\begin{array}{crrl}
&\big[& \mbs{A}(\mbs{L}_a.\,x,y(a))\subst{x \mapsto \mbs{L}_b.\,x(b);y(a)\mapsto y} &\big]
\\
\approx^\s{R}
\!\!\!&&&
\\
&\big[& x\subst{x \mapsto \mbs{L}_b.\,x(b);y(a)\mapsto y} &\big]
\end{array}
\\
\approx^\s{R}
&
\mbs{L}_b.\,x(b)
\\
&
&\text{by }(\alpha):
\begin{array}{crrl}
&\big[&\mbs{L}_b.\,x(b)\subst{x(c)\mapsto x(c)}&\big]
\\
\approx^\s{R}
\!\!\!&&&
\\
&\big[&\mbs{L}_a.\,x(a)\subst{x(c)\mapsto x(c)}&\big]
\end{array}
\\
\approx^\s{R}
&
\mbs{L}_a.\,x(a)
\enspace .
\end{array}
\]
\end{example}

The soundness and completeness of SNR is established by means of the
internal completeness of the NES~$\toTES{\mathbb T}$: 
\begin{equation*}
\label{eqn:comp-rw-for-nomthy}
\begin{array}{cl}
&
\hspace{-20mm}
\eqn{\nomctx{\atmlst a n}{V}}{s}{t}
\text{ is satisfied by all \mh{\mathbb T} algebras}
\\[2mm]
\Longleftrightarrow&
\catalg{\toTES{\mathbb{T}}} \;\models\;
\eqnz
{\toTES{\terminctx{\nomctx{\atmlst a n}{V}}{s}}}
{\toTES{\terminctx{\nomctx{\atmlst a n}{V}}{t}}}
\;:\; \atmprd{n} \rightarrow T_\Sigma \toTES{V}
\\[2mm]
\Longleftrightarrow&
\qt^{\toTES{\mathbb T}}_{\toTES{V}} 
\comp \toTES{\terminctx{\nomctx{\atmlst a n}{V}}{s}}
\;=\;
\qt^{\toTES{\mathbb T}}_{\toTES{V}} 
\comp \toTES{\terminctx{\nomctx{\atmlst a n}{V}}{t}}
\;:\;
\atmprd{n} \rightarrow T_{\toTES{\mathbb T}} \toTES{V}
\\[2mm]
\Longleftrightarrow&
\qt^{\toTES{\mathbb T}}_{\toTES{V}} 
\big(\toTES{\terminctx{\nomctx{\atmlst a n}{V}}{s}} (\atmlst a n)\big)
\;=\;
\qt^{\toTES{\mathbb T}}_{\toTES{V}} 
\big(\toTES{\terminctx{\nomctx{\atmlst a n}{V}}{t}} (\atmlst a n)\big)
\quad\text{in } 
\U{T_{\toTES{\mathbb T}} \toTES{V}}/_{\approx_E}
\\[2mm]
\Longleftrightarrow&
[s]_{\approx_{E}} = [t]_{\approx_{E}} 
\quad\text{in } 
\U{T_{\toTES{\mathbb T}} \toTES{V}}/_{\approx_E}
\\[2mm]
\Longleftrightarrow&
s \approx_{E} t 
\quad\text{in } 
\U{T_{\toTES{\mathbb T}} \toTES{V}}
\\[2mm]
\Longleftrightarrow&
s \approx^\s{R}_{E} t 
\quad\text{in } 
\U{T_{\toTES{\mathbb T}} \toTES{V}}
\;.
\end{array}
\end{equation*}\\[-5mm]

The completeness of SNEL follows, as for all ${t,t'\in (T_\Sigma \toTES
U)_{\atmlstz d}}$ every proof of $t\approx_E t'$ can be turned into a
proof of ${\Eeqn {\nomctx{\atmlstz d}U} t {t'}}$ in SNEL.  
In particular, concerning the rule~$\s{Inst}$, for every
$(\eqn{\nomctx{\atmlst a n}V}{l}{r})\in E$, $\atmlsth b n\in\atmprd n$, and
$s_x\in(T_\Sigma\toTES U)_{(\atmlst {c_x}{V(x)},\atmlstz c)}$~($x\in\U V$)
with $\atmlstz b \freshfor \atmlstz c$, one deduces
\[
\Eeqn {\nomctx{\atmlstz b,\atmlstz c}U} {t_l} {t_r}\quad,\enspace
\text{for }
t_u 
= \big(\atmtrans{\atmlsth a n}{\atmlsth b n}\permact u\big) 
    \subst{x(\atmlstz c_x)\mapsto s_x}_{x\in\U V}
\enspace,
\]\\[-2.5mm]
by means of the rules~$\s{Axiom}$, $\s{Eqvar}$, $\s{Ref}$,
$\s{IntroSubst}_\amalg$; and subsequently derives 
${\Eeqn{\nomctx{\atmlstz d}U}{t_l}{t_r}}$ by means of the rules~$\s{Elim}$
and/or $\s{Inc}$ for any $\atmlstz d$ such that $t_l,t_r\in (T_\Sigma\toTES
U)_{\atmlstz d}$.


\subsection{Related work}

Algebraic structure and rewriting in a nominal setting have already been
considered in the literature.  \citet{GabbayMathijssen06,GabbayMathijssen07}
and~\citet{CloustonPitts07} introduced an essentially equivalent notion of
nominal algebra and provided sound and complete equational logics for them,
whilst~\citet{FGM04} introduced nominal rewriting.  

Our SNEL and the \emph{Nominal Equational Logic~(NEL)}
of~\citet{CloustonPitts07} are equivalent.  Indeed,
see~\citet[Section~8.2.6]{HurThesis} for a translation between the equality
judgements of SNEL and NEL that respects the corresponding satisfaction
relations.  Thus, by virtue of the associated completeness theorems, SNEL and
NEL establish the same theorems under different syntactic 
formalisms. 

The \emph{Nominal Rewriting~(NR)} of~\citep{
FG07} appears to be a term-rewriting version of NEL.
However, it has the shortcoming of not being complete for nominal equational
reasoning~(see~\cite[Section~8.2.7]{HurThesis}).


Our 
approach allows us to also put the \emph{Equational Logic for Binding
Terms~(ELBT)} of~\citet{Hamana03} in the nominal context.
Whereas SNEL arises from an EML on $\Nom$, ELBT arises from a related EML on
the super-category~$\SetToI$, for $\ISet$ the category of finite sets and
injections.  Crucially, however, the
embedding~$\Nom\,\rightembedding\SetToI$ does not
preserve the epimorphic projection 
maps~$p_{n,m}:\NomAtom^{\#(n+m)}\rightarrow \NomAtom^{\# n}$~($n\geq0,m>0$).
Thus, the only essential difference between SNEL and ELBT is that the latter
lacks the rule~$\s{Elim}$ (which 
arises from the EML rule~$\s{Local}_1$ with respect to~$p_{n,m}$) as it is
unsound.

\section*{\ul{\large Conclusion}}

We have introduced a categorical framework for the synthesis of equational
logics.  
This 
comprises a general abstract notion of equational presentation 
together with an equational deduction system 
that is sound for a canonical model theory.  
In this context, we have also introduced a mathematical methodology for
establishing completeness.  
This 
is based on an internal strong completeness result 
that typically leads, through an analysis of the construction of free
algebras, to a characterisation of satisfiability via a \mbox{rewriting-style}
deduction system embedded within the equational deduction system.

Two applications of our theory and methodology were presented.  They
respectively provide a
rational reconstruction of Birkhoff's Equational Logic and a novel nominal
logic for reasoning about algebraic structure with name-binding operators.  A
further major application was given in~\citet{FioreHurSOEqLog} with the
synthesis of \emph{Second-Order Equational Logic}: a deductive system for
equational reasoning about 
languages with variable binding and parameterised metavariables~(see
also~\citet{FioreMahmoud}). 

The extension of the theory of this paper from the \mbox{mono-sorted} to
the \mbox{multi-sorted} setting requires a more involved categorical
theory~(see~\citet{Fiore08,FioreHur08,HurThesis}).  A yet more
comprehensive extension for a theory of rewriting modulo equations has
also been developed.  

\section*{Acknowledgement}

We are most grateful to Pierre-Louis Curien for the invitation to
contribute to this volume, for his feedback on the first draft of the
paper that 
helped to improve the presentation, and for his patience in waiting for
this revised version.


\appendix

\section{Proof of Theorem~\ref{EMLsoundness}}
\label{EMLsoundnessproof}

\begin{notation}
  For $f:V\otimes A\rightarrow B$, we write $\ol f:A\rightarrow [V,B]$ for the
  transpose of $f$ with respect to the adjunction~$V\otimes-\dashv [V,-]$ and
  $\widehat f:V\rightarrow [A,B]$ for the transpose of $f$ with respect to the
  adjuction~$-\otimes A\dashv [A,-]$.
\end{notation}

\begin{proof}[Proof of Theorem~\ref{EMLsoundness}]
We establish the soundness of each EML rule;~\ie,~that every \mh{\mathbb S}
algebra satisfying the premises of a rule also satisfies the conclusion.
The soundness of the rules $\s{Ref}$, $\s{Sym}$, $\s{Trans}$, and $\s{Axiom}$
is trivial.
To show the soundness of the rule $\s{Subst}$, one uses that, for all
$u_1: C\rightarrow TB$ and $u_2: B\rightarrow TA$, 
\[
\llrrbrk{u_1 \subst{u_2}}_{(X,s)}
=
\llrrbrk{u_1}_{(X,s)} \comp \big(\ol{\llrrbrk{u_2}_{(X,s)}} \tensor C\big)
: [A,X]\tensor C\rightarrow X
\, .
\]
To show the soundness of the rule $\s{Ext}$, one uses that, for all 
$u: C\rightarrow TA$,
\[\llrrbrk{\tensorext{V}{u}}_{(X,s)}
=
\llrrbrk{u}_{(X,s)} \comp (\widehat{\mathbf{p}}\tensor C) \comp 
\alphaact^{-1}_{[V\tensor A,X],V,C}
: [V\tensor A,X]\tensor (V\tensor C)\rightarrow X
\, ,\] 
where 
\[
\mathbf{p}
= 
\big(
\xymatrix@C=5pc{
  \big([V\tensor A,X]\tensor V\big)\tensor A 
  \ar[r]^-{\alphaact_{[V\tensor A,X],V,A}}
  &
  [V\tensor A,X]\tensor (V\tensor A)
  \ar[r]^-{\epsilon^{V\tensor A}_{X}}
  &
  X
}\big)\,
.
\]
Finally, to show the soundness of the rule $\s{Local}$, one uses that, 
for all $u:C\rightarrow TA$ and $e:C'\rightarrow C$, 
\[
\algstr{\llrrbrk{u\comp e}}{(X,s)}
=
\algstr{\llrrbrk{u}}{(X,s)} \comp ([A,X]\tensor e)
:  [A,X]\tensor C'\rightarrow X
\]\\[-1.5mm]
and that, for every jointly epimorphic family
$\setof{e_i:C_i\rightarrow C}_{i\in I}$, the family 
$\setof{ [A,X]\tensor e_i 
         : [A,X]\tensor C_i\rightarrow [A,X]\tensor C}_{i\in I}$ 
is also jointly epimorphic.  
\end{proof}

\section{Proof of Theorem~\ref{thm:int-comp}}
\label{IntCompApp}

We introduce several lemmas before proceeding to prove the theorem.

\begin{notation}
  For $f:V\otimes A\rightarrow B$, we write $\ol f:A\rightarrow [V,B]$ for the
  transpose of $f$ with respect to the adjunction~$V\otimes-\dashv [V,-]$ and
  $\widehat f:V\rightarrow [A,B]$ for the transpose of $f$ with respect to the
  adjuction~$-\otimes A\dashv [A,-]$.
\end{notation}

\begin{lemma}
Let ${\mathbb S =(\ul{\cat C},\monad T,\Ax)}$ be a MES.  For every 
\mh{\mathbb S}{algebra}~$(X,s: T X \rightarrow X)$, the 
\mh{T} algebra~$\big(\inthom{V,X}, s^V:T\inthom{V,X}\rightarrow \inthom{V,X}\big)$,
where $s^V$ is the transpose of
\[
\ul{s^V} = \big(
\xymatrix@C=4pc{
  V\tensor T\inthom{V,X}
  \ar[r]^-{\st_{V,\inthom{V,X}}}
&
  T(V\tensor \inthom{V,X})
  \ar[r]^-{T(\epsilon^V_X)}
&
  TX
  \ar[r]^-{s}
&
  X
}\big)\, ,
\]
is an \mh{\mathbb S}{algebra}.
\end{lemma}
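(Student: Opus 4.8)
The plan is to verify the two defining conditions of an $\mathbb{S}$-algebra in turn: first that $([V,X],s^V)$ is an Eilenberg--Moore algebra for $\monad T$, and then that it satisfies every equation in $\Ax$. Throughout I would work via the transpose adjunction $V\tensor -\dashv [V,-]$, exploiting that a morphism into $[V,X]$ is determined by its transpose into $X$.

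For the Eilenberg--Moore laws I would unwind $\ul{s^V}=s\comp T(\epsilon^V_X)\comp \st_{V,[V,X]}$ and check each law after transposing. The unit law $s^V\comp\eta_{[V,X]}=\id$ reduces, by naturality of $\st$ together with the strength unit coherence $\st_{V,W}\comp(V\tensor\eta_W)=\eta_{V\tensor W}$, to the unit law $s\comp\eta_X=\id_X$ of $(X,s)$; the associativity law $s^V\comp\mu_{[V,X]}=s^V\comp T(s^V)$ reduces similarly, using the compatibility of $\st$ with $\mu$ and the associativity law of $(X,s)$. These are routine diagram chases and I expect no real difficulty here.

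The substance is the satisfaction of equations, and for this I would isolate a single coherence identity expressing the interpretation in $([V,X],s^V)$ in terms of that in $(X,s)$ \emph{uniformly in the term}. Concretely, I claim that for every term $t:C\rightarrow TA$,
\[
\llrrbrk{t}_{([V,X],s^V)}\;=\;\ol{\,\llrrbrk{t}_{(X,s)}\comp(\kappa\tensor C)\comp\theta\,}\,,
\]
where $\theta$ is the canonical symmetry--associativity isomorphism rearranging $V\tensor([A,[V,X]]\tensor C)$ into $([A,[V,X]]\tensor V)\tensor C$, and $\kappa:[A,[V,X]]\tensor V\rightarrow[A,X]$ is the transpose of the composite applying both evaluations $\epsilon^A_{[V,X]}$ and $\epsilon^V_X$. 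The decisive point is that $\kappa$ and $\theta$ depend only on $A$, $C$, $V$, $X$ and not on $t$, so the sole dependence of the right-hand side on $t$ is through $\llrrbrk{t}_{(X,s)}$. Granting the identity, if $(X,s)\models\eqnz u v:C\rightarrow TA$, that is $\llrrbrk{u}_{(X,s)}=\llrrbrk{v}_{(X,s)}$, then the two right-hand sides coincide and hence $\llrrbrk{u}_{([V,X],s^V)}=\llrrbrk{v}_{([V,X],s^V)}$, i.e.\ $([V,X],s^V)\models\eqnz u v$. Since $(X,s)$ is an $\mathbb{S}$-algebra it satisfies every equation of $\Ax$, and the identity transports this to $([V,X],s^V)$.

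The main obstacle is establishing this coherence identity. I would prove it by transposing $\llrrbrk{t}_{([V,X],s^V)}$ back along $V\tensor -\dashv[V,-]$, substituting the definition of $s^V$ as the transpose of $\ul{s^V}$, and then migrating the outer strength $\st_{V,[V,X]}$ inward past $T(\epsilon^A_{[V,X]})$ and $\st_{[A,[V,X]],A}$ by means of the coherence axioms of $\st$ (its naturality and its compatibility with $\tensor$); the two evaluations then combine, through the closed-structure isomorphisms (of the form $[A,[V,X]]\iso[V\tensor A,X]$) and the symmetry, into exactly $\llrrbrk{t}_{(X,s)}\comp(\kappa\tensor C)$ after the rearrangement $\theta$. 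The care needed is entirely in tracking the symmetry and associativity isomorphisms---the paper places the hom-object on the left in $\epsilon^V_X$ but on the right in $\epsilon^A_X$, so symmetry mediates at several points---and in confirming that the strength-naturality steps leave no residual dependence on $t$ beyond $\llrrbrk{t}_{(X,s)}$.
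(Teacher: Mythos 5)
Your proposal is correct and follows essentially the same route as the paper: the Eilenberg--Moore laws are verified by transposing along $V\tensor-\dashv[V,-]$, and satisfaction of $\Ax$ is transferred via a term-independent factorisation $\llrrbrk{t}_{([V,X],s^V)}=\ol{\llrrbrk{t}_{(X,s)}\comp(\widehat{\mathbf{p}}\tensor C)\comp\alpha^{-1}}$, which is exactly your coherence identity up to the symmetry you insert to place the hom-object on the other side of $V$. The paper's version avoids that symmetry by defining the evaluation map $\mathbf{p}$ on $(V\tensor[A,[V,X]])\tensor A$, but this is a cosmetic rearrangement, not a different argument.
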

\proof 
That $(\inthom{V,X},s^V)$ is an Eilenberg-Moore algebra follows from
transposing the following identities:
\begin{enumerate}[(1)]
  \item $ \ul{s^V} \comp (V\tensor \eta_{\inthom{V,X}})
    \ =\  \epsilon^V_X \ :\  V \tensor \inthom{V,X} \rightarrow X $,
    \medskip

  \item $ \ul{s^V} \comp (V\tensor\mu_{\inthom{V,X}})
    \ =\  \ul{s^V} \comp (V\tensor T (\ul{s^V})) 
    \ :\  V \tensor T T \inthom{V,X} \rightarrow X $.
\end{enumerate}
To show that every equation in $\Ax$ is satisfied in $(\inthom{V,X},s^V)$,
one uses that, for all ${w:C\rightarrow TA}$,
\[
\algstr{\llrrbrk{w}}{(\inthom{V,X}, s^V)}
\ =\ 
\ol{
\algstr{\llrrbrk{w}}{(X,s)} \comp (\widehat{\mathbf{p}}\tensor C) \comp
\alpha^{-1}_{V,\inthom{A,\inthom{V,X}},C} 
}
\ :\ 
\inthom{A,\inthom{V,X}}\tensor C \rightarrow [V,X]
\]
where
$\mathbf{p}$ denotes the composite
\[
\xymatrix@C=5pc{
  (V\tensor\inthom{A,\inthom{V,X}})\tensor A
  \ar[r]^-{\alpha_{V,\inthom{A,\inthom{V,X}},A}}
  &
}
\!\!\!
\xymatrix@C=3pc{
  V\tensor(\inthom{A,\inthom{V,X}}\tensor A)
  \ar[r]^-{V\tensor \epsilon^A_{\inthom{V,X}}}
  &
}
\!\!\!
\xymatrix@C=2pc{
  V\tensor\inthom{V,X}
  \ar[r]^-{\epsilon^V_X}
  &
  X
}
\, .
\eqno{\qEd}
\]

\begin{lemma}
  For $\,\mathbb S = (\ul{\cat C},\monad T,\Ax)$ a MES admitting free
  algebras, the free \mh{\mathbb S} algebra monad~$\monad T_\mathbb{S}$ on
  $\cat C$ is strong.
The components of the strength $\st^\mathbb{S}$ are given by the unique maps
such that the following diagram commutes:
\[
\xymatrix@C=45pt{ 
V\tensor T T_{\mathbb S} X \ar[d]_-{V\tensor\tau^{\mathbb S}_X} \ar[r]^-{\st_{V,T_{\mathbb S}X}} 
& 
T(V\tensor T_{\mathbb S}X)\ar[r]^-{T(\st^{\mathbb S}_{V,X})}
& 
T T_{\mathbb S}(V\tensor X) \ar[d]^-{\tau^{\mathbb S}_{V\tensor X}}
\\
V\tensor T_{\mathbb S}X\ar@{-->}[rr]^-{\exists!\,\st^{\mathbb S}_{V,X}} & &  T_{\mathbb S}(V\tensor X)
\\
\ar[u]^-{V\tensor\eta^{\mathbb S}_X} V\tensor X \ar[rru]_-{\eta^{\mathbb S}_{V\tensor X}} & 
}
\]
\end{lemma}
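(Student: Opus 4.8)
The plan is to \emph{define} the strength $\st^{\mathbb S}$ by transposition together with the universal property of free $\mathbb S$-algebras, and then to derive everything else---the defining diagram, uniqueness, and the strength laws---by reducing equalities between maps into free $\mathbb S$-algebras to equalities between their precompositions with the units $\eta^{\mathbb S}$. The symmetric monoidal closed structure enters only through the adjunction $V\tensor-\dashv\inthom{V,-}$ and, crucially, through the preceding lemma, which supplies $\mathbb S$-algebra structures on internal homs.

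First I would construct $\st^{\mathbb S}_{V,X}$. By the preceding lemma applied to the free $\mathbb S$-algebra $(T_{\mathbb S}(V\tensor X),\tau^{\mathbb S}_{V\tensor X})$, the object $\inthom{V,T_{\mathbb S}(V\tensor X)}$ carries an $\mathbb S$-algebra structure $(\tau^{\mathbb S}_{V\tensor X})^V$. Transposing the unit $\eta^{\mathbb S}_{V\tensor X}:V\tensor X\rightarrow T_{\mathbb S}(V\tensor X)$ gives a map $\ol{\eta^{\mathbb S}_{V\tensor X}}:X\rightarrow\inthom{V,T_{\mathbb S}(V\tensor X)}$ in $\cat C$. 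By the universal property of the free $\mathbb S$-algebra $(T_{\mathbb S}X,\tau^{\mathbb S}_X)$ on $X$, there is a unique $\mathbb S$-algebra homomorphism $g:(T_{\mathbb S}X,\tau^{\mathbb S}_X)\rightarrow(\inthom{V,T_{\mathbb S}(V\tensor X)},(\tau^{\mathbb S}_{V\tensor X})^V)$ with $g\comp\eta^{\mathbb S}_X=\ol{\eta^{\mathbb S}_{V\tensor X}}$, and I would set $\st^{\mathbb S}_{V,X}$ to be its transpose, so that $\ol{\st^{\mathbb S}_{V,X}}=g$ and hence $\st^{\mathbb S}_{V,X}=\epsilon^V_{T_{\mathbb S}(V\tensor X)}\comp(V\tensor g)$.

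Next I would verify the defining diagram. Transposing $g\comp\eta^{\mathbb S}_X=\ol{\eta^{\mathbb S}_{V\tensor X}}$ yields the lower triangle $\st^{\mathbb S}_{V,X}\comp(V\tensor\eta^{\mathbb S}_X)=\eta^{\mathbb S}_{V\tensor X}$ at once. For the upper square I would transpose the homomorphism condition $g\comp\tau^{\mathbb S}_X=(\tau^{\mathbb S}_{V\tensor X})^V\comp T(g)$: its left-hand side transposes to $\st^{\mathbb S}_{V,X}\comp(V\tensor\tau^{\mathbb S}_X)$, while for the right-hand side I would substitute the formula for $(\tau^{\mathbb S}_{V\tensor X})^V$ from the preceding lemma and use naturality of $\st$ in its second argument to rewrite the transpose as $\tau^{\mathbb S}_{V\tensor X}\comp T(\st^{\mathbb S}_{V,X})\comp\st_{V,T_{\mathbb S}X}$, which is exactly the composite along the top and right of the square. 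Uniqueness is then immediate: any map fitting the triangle and square transposes to an $\mathbb S$-algebra homomorphism out of $(T_{\mathbb S}X,\tau^{\mathbb S}_X)$ agreeing with $\ol{\eta^{\mathbb S}_{V\tensor X}}$ on $\eta^{\mathbb S}_X$, hence coincides with $g$.

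Finally I would confirm that $\st^{\mathbb S}$ is a strength on $\monad T_{\mathbb S}$: naturality in $V$ and $X$, compatibility with the unitor and associator of $\tensor$, and the two monad-strength laws. The $\eta^{\mathbb S}$-law is precisely the triangle above, and each remaining identity is an equality of maps whose codomain is a free $\mathbb S$-algebra, or a cotensor $\inthom{W,T_{\mathbb S}(-)}$ thereof (again an $\mathbb S$-algebra by the preceding lemma); transposing it, both sides are $\mathbb S$-algebra homomorphisms out of a free $\mathbb S$-algebra, so by the universal property it suffices to check the identity after precomposition with the relevant $\eta^{\mathbb S}$, where the triangle and the corresponding strength law for $\monad T$ make it collapse to a routine diagram chase. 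I expect the main obstacle to be compatibility with $\mu^{\mathbb S}$: the domain there involves $T_{\mathbb S}T_{\mathbb S}X$, so the reduction to generators must be iterated---first along $\eta^{\mathbb S}_{T_{\mathbb S}X}$ and then along $\eta^{\mathbb S}_X$---and at each stage one must verify that the composites under comparison are genuinely $\mathbb S$-algebra homomorphisms before invoking freeness.
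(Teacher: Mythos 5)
Your proposal is correct and follows essentially the same route as the paper's proof: define $\st^{\mathbb S}_{V,X}$ by transposing the unique homomorphic extension of $\ol{\eta^{\mathbb S}_{V\tensor X}}:X\rightarrow\inthom{V,T_{\mathbb S}(V\tensor X)}$ into the $\mathbb S$-algebra $\big(\inthom{V,T_{\mathbb S}(V\tensor X)},(\tau^{\mathbb S}_{V\tensor X})^V\big)$ supplied by the preceding lemma, then obtain the defining diagram, its uniqueness, naturality, and the coherence laws by transposition and uniqueness of homomorphic extensions out of free algebras. The only inessential deviation is your worry about the $\mu^{\mathbb S}$-law: no iterated reduction to generators is needed, since a single application of freeness suffices once one views $T_{\mathbb S}T_{\mathbb S}X$ as the free $\mathbb S$-algebra on $T_{\mathbb S}X$ and checks that both transposed composites are homomorphic extensions of $\ol{\st^{\mathbb S}_{V,X}}:T_{\mathbb S}X\rightarrow\inthom{V,T_{\mathbb S}(V\tensor X)}$, exactly as the paper does.
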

\proof 
First note that 
$\ol{\st^\mathbb{S}_{V,X}}:T_{\mathbb S}X\rightarrow \inthom{V,T_{\mathbb S}(V\tensor X)}$
is the unique homomorphic extension of 
$\ol{\eta^{\mathbb
S}_{V\tensor X}}: X \rightarrow \inthom{V,T_\mathbb{S}(V\tensor X)}$
with respect to the \mh{\mathbb S} algebra 
$\big(\inthom{V,T_\mathbb{S}(V\tensor X)}, {(\tau^\mathbb{S}_{V\tensor X})}^V\big)$.

The naturality of $\st^\mathbb{S}$ follows from the fact that, for all
$f:V\rightarrow V'$ and $g:C\rightarrow C'$, the maps
\[
\ol{T_\mathbb{S}(f\tensor g) \comp \st^\mathbb{S}_{V,X}}\ ,\ 
\ol{\st^\mathbb{S}_{V',X'} \comp (f \tensor T_\mathbb{S}(g))}
\ :\ T_\mathbb{S} X \rightarrow \inthom{V,T_\mathbb{S}(V'\tensor X')}
\]
are both an homomorphic extension of 
$\ol{
\eta^{\mathbb S}_{V'\tensor X'}\comp (f\tensor g)
}: X \rightarrow \inthom{V,T_\mathbb{S}(V'\tensor X')}$
with respect to the \mh{\mathbb S} algebra 
$\big(\inthom{V,T_\mathbb{S}(V'\tensor X')},{(\tau^\mathbb{S}_{V'\tensor X'})}^V\big)$.

Three of the four coherence conditions for strength follow from the fact that
the maps
\begin{eqnarray*}
\ol{T_\mathbb{S} (\lambda_X)\comp \st^\mathbb{S}_{I,X}}\ ,\ 
\ol{\lambda_{T_\mathbb{S} X}}
&:& 
T_\mathbb{S} X \rightarrow \inthom{I,T_\mathbb{S} X}
\\
\ol{T_\mathbb{S}(\alpha_{U,V,X}) \comp \st^\mathbb{S}_{U\monten V, X}}\ ,\ 
\ol{\st^\mathbb{S}_{U,V\tensor X} \comp (U\tensor \st^\mathbb{S}_{V,X}) 
\comp \alpha_{U,V,T_\mathbb{S}X}}
&:& 
T_\mathbb{S} X \rightarrow 
\inthom{U\tensor V, T_\mathbb{S} (U\tensor(V\tensor X))}
\\
\ol{\st^\mathbb{S}_{V,X} \comp (V\tensor \mu^\mathbb{S}_X)}\ ,\ 
\ol{\mu^\mathbb{S}_{V\tensor X} \comp T_\mathbb{S}(\st^\mathbb{S}_{V,X}) \comp \st^\mathbb{S}_{V,T_\mathbb{S}X}}
&:& 
T_\mathbb{S}T_\mathbb{S} X \rightarrow \inthom{V,T_\mathbb{S} (V\tensor X)}
\end{eqnarray*}
are respectively homomorphic extensions of
\begin{eqnarray*}
\ol{\eta^\mathbb{S}_X \comp \lambda_X} &:& 
 X \rightarrow \inthom{I,T_\mathbb{S} X}
\\
\ol{\eta^\mathbb{S}_{U\tensor(V\tensor X)} \comp \alpha_{U,V,X}} &:& 
 X \rightarrow \inthom{U\tensor V, T_\mathbb{S} (U\tensor(V\tensor X))}
\\
\ol{\st^\mathbb{S}_X} &:& 
T_\mathbb{S} X \rightarrow \inthom{V,T_\mathbb{S} (V\tensor X)}
\end{eqnarray*}
with respect to the \mh{\mathbb S} algebras 
$\big(\inthom{I,T_\mathbb{S} X},{(\tau^\mathbb{S}_X)}^I\big)$,
$\big(\inthom{U\tensor V, T_\mathbb{S} (U\tensor(V\tensor X))},
{(\tau^\mathbb{S}_{U\tensor(V\tensor X)})}^{(U\tensor V)}\big)$,
$\big(\inthom{V,T_\mathbb{S} (V\tensor X)},{(\tau^\mathbb{S}_{V\tensor X})}^V\big)$.
The remaining coherence condition is the triangle in the diagram above.
\qed 

\begin{lemma}
  Let $\mathbb S = (\ul{\cat C},\monad T,\Ax)$ be a MES admitting free
  algebras.  Then, the quotient natural transformation 
$\qt^{\mathbb S}:T\natrightarrow T_{\mathbb S}$
is a strong functor morphism 
between the strong monads $\monad T$ and $\monad T_{\mathbb S}$.
That is, the following diagram commutes:
\[
\xymatrix@C=40pt{
V\tensor TX \ar[d]_-{\st_{V,X}} \ar[r]^-{V\tensor \qt^{\mathbb S}_X} & 
V\tensor T_{\mathbb S}X\ar[d]^-{\st^{\mathbb S}_{V,X}}\\ 
T(V\tensor X) \ar[r]_-{\qt^{\mathbb S}_{V\tensor X}} & T_{\mathbb S}(V\tensor X)\\
}
\]
\end{lemma}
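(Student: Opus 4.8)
The plan is to transpose the square along the adjunction $V\tensor-\dashv\inthom{V,-}$ and then to exploit the universal property of the free $T$-algebra $(TX,\mu_X)$. Writing $A=T_\mathbb{S}(V\tensor X)$ and $a=\tau^\mathbb{S}_{V\tensor X}$, the square commutes if and only if the transposed maps
\[
f \,=\, \ol{\st^\mathbb{S}_{V,X}}\comp \qt^\mathbb{S}_X\,,\qquad
g \,=\, \ol{\qt^\mathbb{S}_{V\tensor X}\comp \st_{V,X}}
\;:\; TX\rightarrow \inthom{V,A}
\]
coincide. Since $(TX,\mu_X)$ is the free $T$-algebra on $X$, and since $\big(\inthom{V,A},a^V\big)$ is an $\mathbb S$-algebra, in particular a $T$-algebra, by the first lemma of this appendix, it suffices to show that both $f$ and $g$ are $T$-algebra homomorphisms $(TX,\mu_X)\rightarrow\big(\inthom{V,A},a^V\big)$ and that $f\comp\eta_X=g\comp\eta_X$; the free-algebra universal property then forces $f=g$.

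The agreement on the unit is routine: using the defining property of $\ol{\st^\mathbb{S}_{V,X}}$ as the homomorphic extension of $\ol{\eta^\mathbb{S}_{V\tensor X}}$ from the previous lemma together with $\qt^\mathbb{S}_X\comp\eta_X=\eta^\mathbb{S}_X$, one finds $f\comp\eta_X=\ol{\eta^\mathbb{S}_{V\tensor X}}$; and using the unit coherence $\st_{V,X}\comp(V\tensor\eta_X)=\eta_{V\tensor X}$ of $\monad T$ together with $\qt^\mathbb{S}_{V\tensor X}\comp\eta_{V\tensor X}=\eta^\mathbb{S}_{V\tensor X}$, one likewise obtains $g\comp\eta_X=\ol{\eta^\mathbb{S}_{V\tensor X}}$. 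That $f$ is a homomorphism is immediate, as it is the composite of $\qt^\mathbb{S}_X$, which is a $T$-homomorphism $(TX,\mu_X)\rightarrow(T_\mathbb{S}X,\tau^\mathbb{S}_X)$ by the quotient-map diagram~(\ref{eqn:quotient-map}), followed by $\ol{\st^\mathbb{S}_{V,X}}$, which is even an $\mathbb S$-homomorphism by the previous lemma.

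The crux, and the step I expect to be the main obstacle, is verifying that $g$ is a $T$-homomorphism. I would transpose the homomorphism condition $g\comp\mu_X=a^V\comp T(g)$ back along $V\tensor-\dashv\inthom{V,-}$, reducing it to the identity $\ul g\comp(V\tensor\mu_X)=\ul{a^V}\comp(V\tensor T(g))$, where $\ul g=\epsilon^V_A\comp(V\tensor g)=\qt^\mathbb{S}_{V\tensor X}\comp\st_{V,X}$ is the transpose of $g$. Expanding $\ul{a^V}=a\comp T(\epsilon^V_A)\comp\st_{V,\inthom{V,A}}$ and rewriting the right-hand side successively by the naturality of $\st$, the counit identity $\epsilon^V_A\comp(V\tensor g)=\ul g$, the quotient-map equation $a\comp T(\qt^\mathbb{S}_{V\tensor X})=\qt^\mathbb{S}_{V\tensor X}\comp\mu_{V\tensor X}$, and finally the monad--strength coherence $\mu_{V\tensor X}\comp T(\st_{V,X})\comp\st_{V,TX}=\st_{V,X}\comp(V\tensor\mu_X)$, one brings it to $\qt^\mathbb{S}_{V\tensor X}\comp\st_{V,X}\comp(V\tensor\mu_X)=\ul g\comp(V\tensor\mu_X)$, which holds by the definition of $\ul g$. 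With $g$ thereby exhibited as a homomorphism, the argument of the first paragraph yields $f=g$, and transposing back gives the required commutativity.
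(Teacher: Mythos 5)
Your proposal is correct and follows essentially the same route as the paper: transposing both legs of the square along $V\tensor-\dashv\inthom{V,-}$ and identifying them as homomorphic extensions of $\ol{\eta^{\mathbb S}_{V\tensor X}}$ into the Eilenberg--Moore algebra $\big(\inthom{V,T_\mathbb{S}(V\tensor X)},(\tau^{\mathbb S}_{V\tensor X})^V\big)$, which the freeness of $(TX,\mu_X)$ forces to coincide. The paper states this in one line; your write-up supplies exactly the verifications (unit agreement, and the homomorphism property of each transpose, the latter via naturality of $\st$, the quotient-map diagram, and the monad--strength coherence) that the paper leaves implicit.
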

\proof 
The commutativity of the diagram follows from the fact that both 
\[
\ol{\st^{\mathbb S}_{V,X} \comp (V\tensor \qt^{\mathbb S}_X)}\ ,\ 
\ol{\qt^{\mathbb S}_{V\tensor X} \comp \st_{V,X}}
\ :\ TX \rightarrow \inthom{V,T_\mathbb{S}(V\tensor X)}
\]
are an homomorphic extension of 
$\ol{\eta^{\mathbb S}_{V\tensor X}}: X \rightarrow
\inthom{V,T_\mathbb{S}(V\tensor X)}$
with respect to the Eilenberg-Moore algebra 
$(\inthom{V,T_\mathbb{S}(V\tensor X)}, {(\tau^\mathbb{S}_{V\tensor X})}^V)$
for the monad $\monad T$.
\qed 

We are now ready to prove the internal completeness theorem.
%
%
\begin{proof}[Proof of Theorem~\ref{thm:int-comp}]
We show \ref{thm:int-comp-1} $\Rightarrow$ \ref{thm:int-comp-2} $\Rightarrow$
\ref{thm:int-comp-3} $\Rightarrow$ \ref{thm:int-comp-1}.

\medskip
\noindent\textbf{\ref{thm:int-comp-1} $\Rightarrow$ \ref{thm:int-comp-2}.}
Holds vacuously.

\medskip
\noindent\textbf{\ref{thm:int-comp-2} $\Rightarrow$ \ref{thm:int-comp-3}.}
Because, for all $w: C\rightarrow TA$, the map 
$\qt^\mathbb{S}_A\comp w: C \rightarrow T_\mathbb{S} A$
factors as the composite
\[
\algstr{\llrrbrk{w}}{(T_\mathbb{S} A,\tau^\mathbb{S}_A)}
\comp (\widehat{\mathbf{p}}\tensor C)
\comp \lambda^{-1}_C
\]
for 
$
\xymatrix{
  \mathbf{p} =
  (I\tensor A 
  \ar[r]^-{\lambda_A} 
  & 
  A
  \ar[r]^-{\eta^{\mathbb S}_A}
  &  
  T_\mathbb{S}A ) }$.

\medskip
\noindent\textbf{\ref{thm:int-comp-3} $\Rightarrow$ \ref{thm:int-comp-1}.}
Because for every $(X,s:TX\rightarrow X)\in\catalg{\mathbb S}$ and
$w:C\rightarrow TA$, the interpretation map 
$\algstr{\llrrbrk{w}}{(X,s)}:[A,X]\tensor C\rightarrow X$ factors as the
composite 
\[
s^*
\comp
T_\mathbb{S}(\epsilon^A_X)
\comp
\st^\mathbb{S}_{[A,X],A}
\comp
\big([A,X]\tensor (\qt^\mathbb{S}_A \comp w)\big)
\]
where $s^*:T_\mathbb{S}X \rightarrow X$ is the unique homomorphic
extension of the identity map on $X$ with respect to the 
\mh{\mathbb S}algebra~$(X,s)$.  \end{proof}

\hide {

\[
\xymatrix@C=4pc{
  V\tensor\inthom{V,X}
  \ar[d]_-{V\tensor\eta_{\inthom{V,X}}}
  \ar[rd]^-{\quad\eta_{V\tensor\inthom{V,X}}}
  \ar[rr]^-{\epsilon^V_X}
  &
  &
  X
  \ar[d]^-{\eta_X}
  \ar[rd]^-{\id}
  &
  \\
  V\tensor T\inthom{V,X}
  \ar[r]^-{\st_{V,\inthom{V,X}}}
  \ar@{}[ru]|(.45)*{\hspace*{-5pc}\s{(A)}}
  &
  T(V\tensor\inthom{V,X})
  \ar[r]^-{T(\epsilon^V_X)}
  &
  TX
  \ar[r]^-{s}
  &
  X
}
\]
\[
\xymatrix@C=3pc{
  V\tensor TT\inthom{V,X}
  \ar[r]^-{V\tensor \mu_{\inthom{V,X}}}
  \ar[dd]^(.7){V\tensor T(s^V)}
  \ar[rd]^-{\st_{V,T\inthom{V,X}}}
  &
  V\tensor T\inthom{V,X}
  \ar[r]^-{\st_{V,\inthom{V,X}}}
  \ar@{}[d]^-*{\s{(B)}}
  &
  T(V\tensor \inthom{V,X})
  \ar[r]^-{T(\epsilon^V_X)}
  &
  TX
  \ar[rd]^-{s}
  &
  \\
  &
  T(V\tensor T\inthom{V,X})
  \ar[r]^-{T(\st_{V,\inthom{V,X}})}
  \ar[d]^-{T(V\tensor s^V )}
  &
  TT(V\tensor \inthom{V,X})
  \ar[u]^-{\mu_{V\tensor\inthom{V,X}}}
  \ar[r]^-{TT(\epsilon^V_X)}
  &
  TTX
  \ar[u]^-{\mu_X}
  \ar[d]_-{T(s)}
  &
  X
  \\
  V\tensor T\inthom{V,X}
  \ar[r]_-{\st_{V,\inthom{V,X}}}
  &
  T(V\tensor \inthom{V,X})
  \ar[rr]_-{T(\epsilon^V_X)}
  &
  &
  TX
  \ar[ru]_-{s}
}
\]
where the diagrams $\s{(A)}$ and $\s{(B)}$ commute 
by the coherence condition of the strength $\st$.

Note that the subdiagram $\s{(A)}$ below 
commute by the coherence condition of the strength~$\st$.
\[
\xymatrix@C=2.8pc{
V\tensor ([A,\inthom{V,X}]\tensor C)
\ar[rr]^-{\alphaact^{-1}_{V,[A,\inthom{V,X}],C}}
\ar[d]^-{V\tensor ([A,\inthom{V,X}]\tensor u)}
&
&
(V\tensor [A,\inthom{V,X}])\tensor C
\ar[r]^-{\ol{\mathbf{p}}\tensor C}
\ar[d]^-{(V\tensor [A,\inthom{V,X}])\tensor u}
&
[A,X]\tensor C
\ar[d]^-{[A,X]\tensor u}
\\
V\tensor ([A,\inthom{V,X}]\tensor TA)
\ar[rr]^-{\alphaact^{-1}_{V,[A,\inthom{V,X}],TA}}
\ar[d]^-{V\tensor \st_{[A,\inthom{V,X}],A}}
\ar@{}[rrd]_(.52)*{\s{(A)}}
&
&
(V\tensor [A,\inthom{V,X}])\tensor TA
\ar[r]^-{\ol{\mathbf{p}}\tensor TA}
\ar[d]^-{\st_{V\tensor [A,\inthom{V,X}],A}}
&
[A,X]\tensor TA
\ar[ddd]^-{\st_{[A,X],A}}
\\
V\tensor T([A,\inthom{V,X}]\tensor A)
\ar[dd]_(.5){V\tensor T(\epsilon^A_{\inthom{V,X}})}
&
&
T((V\tensor [A,\inthom{V,X}])\tensor A)
\ar[dd]^-{T(\mathbf{p})}
\ar[rdd]^-{T(\ol{\mathbf{p}}\tensor A)}
\\
&
\save[]
*+<6pt,6pt>{T(V\tensor ([A,\inthom{V,X}]\tensor A))}
\ar@{<-}[lu]|(.5){\st_{V,[A,\inthom{V,X}]\tensor A}}
\ar@{<-}[ru]|(.5){T(\alphaact_{V,[A,\inthom{V,X}],A})}
\ar[d]^-{T(V\tensor \epsilon^A_{\inthom{V,X}})}
\restore
\\
V\tensor T\inthom{V,X}
&
*+<0pt,9pt>{\phantom{1}}
\save[]
*+<6pt,6pt>{T(V\tensor\inthom{V,X})}
\ar@{<-}[l]^(.5){\st_{V,\inthom{V,X}}}
\ar[r]_-{T(\epsilon^V_X)}
\restore
&
TX
\ar[d]^-{s}
&
T([A,X]\tensor A)
\ar[l]^-{T(\epsilon^A_X)}
\\
&
&
X
}
\]

Proposition~\ref{prop:general-free-strong-monad}~(\ref{prop:general-free-strong-monad-1})
and the commutativity of the following two diagrams: 
\[
\xymatrix@C=4pc{
  V\tensor TTX
  \ar[r]^-{\st_{V,TX}}
  \ar[d]_-{V\tensor \mu_X}
  &
  T(V\tensor TX)
  \ar[r]^-{T(\st_{V,X})}
  &
  TT(V\tensor X)
  \ar[r]^-{T(\qt^{\mathbb S}_{V\tensor X})}
  \ar[d]^-{\mu_{V\tensor X}}
  &
  T{T}_{\mathbb S}(V\tensor X)
  \ar[d]^-{{\tau}^{\mathbb S}_{V\tensor X}}
  \\
  V\tensor TX
  \ar[rr]^-{\st_{V,X}}
  &
  &
  T(V\tensor X)
  \ar[r]^-{\qt^{\mathbb S}_{V\tensor X}}
  &
  {T}_{\mathbb S}(V\tensor X)
  \\
  V\tensor X
  \ar[u]^-{V\tensor \eta_X}
  \ar[rru]^-{\eta_{V\tensor X}}
  \ar[rrru]_-{{\eta}^{\mathbb S}_{V\tensor X}}
}
\]
\[
\xymatrix@C=4pc{
  V\tensor TTX
  \ar[dd]_-{V\tensor \mu_X}
  \ar[r]^-{\st_{V,TX}}
  \ar[rd]_-{V\tensor T(\qt^{\mathbb S}_X)}
  &
  T(V\tensor TX)
  \ar[r]^-{T(V\tensor \qt^{\mathbb S}_X)}
  &
  T(V\tensor {T}_{\mathbb S}X)
  \ar[r]^-{T({\st}^{\mathbb S}_{V,X})}
  &
  T{T}_{\mathbb S}(V\tensor X)
  \ar[dd]^-{{\tau}^{\mathbb S}_{V\tensor X}}
  \\
  &
  V\tensor T{T}_{\mathbb S} X
  \ar[ru]_-{\st_{V,{T}_{\mathbb S}X}}
  \ar[d]^-{V\tensor {\tau}^{\mathbb S}_X}
  \\
  V\tensor TX
  \ar[r]^-{V\tensor \qt^{\mathbb S}_X}
  &
  V\tensor {T}_{\mathbb S}X
  \ar[rr]^-{{\st}^{\mathbb S}_{V,X}}
  &
  &
  {T}_{\mathbb S}(V\tensor X)
  \\
  V\tensor X
  \ar[u]^-{V\tensor \eta_X}
  \ar[ru]^-{V\tensor {\eta}^{\mathbb S}_X}
  \ar[rrru]_-{{\eta}^{\mathbb S}_{V\tensor X}}
}
\]

\[
\xymatrix@C=4.5pc{
  C
  \ar[rr]^-{u}
  \ar[d]^-{\lambdaact^{-1}_C}
  &
  &
  TA
  \ar[r]^-{\qt^\mathbb{S}_A}
  \ar[rd]_-{T(\eta^{\mathbb S}_A)}
  \ar@{}[rd]|(.4)*{\hspace*{9pc}\s{(B)}}
  &
  T_\mathbb{S}A
  \\
  I\tensor C
  \ar[r]^-{I\tensor u}
  \ar[rd]_-{\ol{\mathbf{p}}\tensor C}
  &
  I\tensor TA
  \ar[ru]^-{\lambdaact_{TA}}
  \ar[r]^-{\st_{I,A}}
  \ar[rd]^(.6){\ol{\mathbf{p}}\tensor TA}
  \ar@{}[ru]|(.45)*{\hspace*{8pc}\s{(A)}}
  &
  T(I\tensor A)
  \ar[u]_-{T(\lambdaact_A)}
  \ar[rd]^-{T(\ol{\mathbf{p}}\tensor A)}
  &
  TT_\mathbb{S}A
  \ar[u]_-{\tau^\mathbb{S}_A}
  \\
  &
  [A,T_\mathbb{S}A]\tensor C
  \ar[r]_-{[A,T_\mathbb{S}A]\tensor u}
  &
  [A,T_\mathbb{S}A]\tensor TA
  \ar[r]_-{\st_{[A,T_\mathbb{S}A],A}}
  &
  T([A,T_\mathbb{S}A]\tensor A)
  \ar[u]_-{T(\epsilon^A_{T_\mathbb{S}A})}
}
\]
Note that the subdiagram $\s{(A)}$ commutes by the coherence condition
of the strength~$\st$, and 
the commutativity of $\s{(B)}$ follows from 
the two commutative diagrams below,
by the universal property of the monad $\monad T$:
\[
\xymatrix@C=4pc{
  TTA
  \ar[r]^-{T(\qt^\mathbb{S}_A)}
  \ar[d]^-{\mu_A}
  &
  TT_{\mathbb S}A
  \ar[d]^-{\tau^{\mathbb S}_A}
  \\
  TA
  \ar[r]^-{\qt^\mathbb{S}_A}
  &
  T_{\mathbb S}A
  \\
  A
  \ar[u]_-{\eta_A}
  \ar[ru]_-{\eta^{\mathbb S}_A}
}
\qquad\qquad
\xymatrix@C=4pc{
  TTA
  \ar[r]^-{TT(\eta^{\mathbb S}_A)}  
  \ar[d]^-{\mu_A}
  &
  TTT_\mathbb{S}A
  \ar[r]^-{T(\tau^\mathbb{S}_A)}
  \ar[d]^-{\mu_{T_\mathbb{S}A}}
  &
  TT_\mathbb{S}A
  \ar[d]^-{\tau^\mathbb{S}_A}
  \\
  TA
  \ar[r]^-{T(\eta^{\mathbb S}_A)}
  &
  TT_\mathbb{S}A
  \ar[r]^-{\tau^\mathbb{S}_A}
  &
  T_\mathbb{S}A
  \\
  A
  \ar[u]_-{\eta_A}
  \ar[r]^-{\eta^{\mathbb S}_A}
  &
  T_\mathbb{S}A
  \ar[u]_-{\eta_{T_\mathbb{S}A}}
  \ar[ru]_-{\id}
  &
}
\]
where the right half of the right diagram
commutes
because $(T_\mathbb{S}A,\tau^\mathbb{S}_A)$ is an Eilenberg-Moore algebra
for $\monad T$.

denotes the unique map
such that
\[
\xymatrix@C=4pc{
  TT_\mathbb{S}X
  \ar[r]^-{T(s^*)}
  \ar[d]_-{\tau^\mathbb{S}_X}
  &
  TX
  \ar[d]^-{s}
  \\
  T_\mathbb{S}X
  \ar[r]^-{s^*}
  &
  X
  \\
  X
  \ar[u]^-{\eta^{\mathbb S}_X}
  \ar[ru]_-{\id}
}
\]
commutes.

as shown in the commutative diagram below,
\[
\xymatrix@C=3.5pc{
  [A,X]\tensor C
  \ar[r]^-{[A,X]\tensor u}
  &
  [A,X]\tensor TA
  \ar[r]^{\st_{[A,X],A}}
  \ar[d]_-{[A,X]\tensor \qt^\mathbb{S}_A}
  \ar@{}[rd]|(.4)*{\hspace*{2pc}\s{(A)}}
  &
  T([A,X]\tensor A)
  \ar[r]^-{T(\epsilon^A_X)}
  \ar[d]^-{\qt^\mathbb{S}_{[A,X]\tensor A}}
  &
  TX
  \ar[r]^-{s}
  \ar[d]_-{\qt^\mathbb{S}_X}
  \ar@{}[rd]|(.32)*{\s{(B)}}
  &
  X
  \\
  &
  [A,X]\tensor T_\mathbb{S}A
  \ar[r]^{\st^\mathbb{S}_{[A,X],A}}
  &
  T_\mathbb{S}([A,X]\tensor A)
  \ar[r]^-{T_\mathbb{S}(\epsilon^A_X)}
  &
  T_\mathbb{S}X
  \ar[ru]_-{s^*}
  &
}
\]
The subdiagram $\s{(A)}$ above commutes because 
$\qt^\mathbb{S}$ is a strong functor morphism 
from $\monad T$ to $\monad T_\mathbb{S}$,
and the commutativity of $\s{(B)}$ follows
from the commutativity of the two diagrams below,
by the universal property of the monad $\monad T$:
\[
\xymatrix@C=3pc{
  TTX
  \ar[r]^-{T(\qt^\mathbb{S}_X)}
  \ar[d]_-{\mu_X}
  &
  TT_\mathbb{S}X
  \ar[r]^-{T(s^*)}
  \ar[d]^-{\tau^\mathbb{S}_X}
  &
  TX
  \ar[d]^-{s}
  \\
  TX
  \ar[r]^-{\qt^\mathbb{S}_X}
  &
  T_\mathbb{S}X
  \ar[r]^-{s^*}
  &
  X
  \\
  X
  \ar[u]^-{\eta_X}
  \ar[ru]^-{\eta^{\mathbb S}_X}
  \ar[rru]_-{\id}
}
\qquad\qquad
\xymatrix@C=3pc{
  TTX
  \ar[r]^-{T(s)}
  \ar[d]_-{\mu_X}
  &
  TX
  \ar[d]^-{s}
  \\
  TX
  \ar[r]^-{s}
  &
  X
  \\
  X
  \ar[u]^-{\eta_X}
  \ar[ru]_-{\id}
}
\]
where the right diagram commutes
because $(X,s)$ is an Eilenberg-Moore algebra for $\monad T$.

}

\end{document}